\newcommand{\tr}{{\text{tr}}}
\newcommand{\B}{\textbf}
\numberwithin{equation}{section}
\newtheorem{theorem}{Theorem}
\newtheorem{corollary}[theorem]{Corollary}
\newtheorem{lemma}[theorem]{Lemma}
\newtheorem{definition}[theorem]{Definition}
\begin{document}

\title{Clifford entropy}
\author{Gianluca Cuffaro}
\author{Matthew B. Weiss}
\affiliation{University of Massachusetts Boston: QBism Group}
\date{\today}

\begin{abstract}
We introduce the Clifford entropy, a measure of how close an arbitrary unitary is to a Clifford unitary, which generalizes the stabilizer entropy for states. We show that this quantity vanishes if and only if a unitary is Clifford, is invariant under composition with Clifford unitaries, and is subadditive under tensor products. Rewriting the Clifford entropy in terms of the stabilizer entropy of the corresponding Choi state allows us to derive an upper bound: that this bound is not tight follows from considering the properties of symmetric informationally complete sets. Nevertheless we are able to numerically estimate the maximum in low dimensions, comparing it to the average over all unitaries, which we derive analytically. Finally, harnessing a concentration of measure result, we show that as the dimension grows large, with probability approaching unity, the ratio between the Clifford entropy of a Haar random unitary and that of a fixed magic gate gives a lower bound on the depth of a doped Clifford circuit which realizes the former in terms of the latter. In fact, numerical evidence suggests that this result holds reliably even in low dimensions. We conclude with several directions for future research.
\end{abstract}

\maketitle

\section{Introduction}

What supplies the ``magic'' that makes quantum computers go? Long ago, the Gottesman-Knill theorem \cite{gottesman1998heisenbergrepresentationquantumcomputers} demonstrated that the fragment of quantum mechanics consisting only of so-called stabilizer states and Clifford operations is efficiently classically simulable. Indeed, the presence of nonstabilizer states and operations drastically increases the sample cost of simulating expectation values \cite{PhysRevLett.118.090501, PhysRevLett.115.070501}. At the same time, as Clifford operations may be more easily implemented on \emph{quantum} computers, Bravy and Kitaev \cite{kitaev} introduced the \emph{magic state model} of quantum computation: here computational universality is achieved despite a restriction to Clifford operations by the injection of a supply of nonstabilizer states \cite{Veitch_2014}. From a resource-theoretic point of view, therefore, the cost of a quantum computation can be quantified by the number of copies of a standardized nonstablizer state required to realize it. Phrased instead in terms of the unitary which prepares the nonstabilizer (or ``magic'') state (e.g., a $T$-gate), one can consider the number of layers of a $T$-doped Clifford circuit necessary to realize an arbitrary unitary as its cost.

A variety of measures of stabilizerness have been introduced in the literature \cite{PRXQuantum.4.010301, PhysRevLett.124.090505, Peleg_2022}, notably the \emph{stabilizer entropy} of Leone, Oliviero, and Hamma \cite{Leone_2022} which is essentially the unique computable measure for pure states \cite{PhysRevA.110.L040403}. In this paper, we introduce an analogous, efficiently computable measure of the Cliffordness of a unitary: the \emph{Clifford entropy}. In fact, there are many Clifford entropies parameterized by $\alpha\in \mathbb{R}$. We prove that the $\alpha$-Clifford entropies satisfy several desirable properties from a resource theoretic point of view (Theorems \ref{faithfulness}, \ref{clifford_invariance}, \ref{tp_subadditivity}): 
\begin{itemize}
    \item[(i)] Faithfulness: $H_\alpha(U)=0$ iff $U$ is a Clifford unitary.
    \item[(ii)] Clifford invariance: $H_\alpha(C_1 U C_2) = H_\alpha(U)$ for any Clifford unitaries $C_1, C_2$.
    \item[(iii)] Subadditivity under tensor products: $H_\alpha(U \otimes V) \leq H_\alpha(U) + H_\alpha(V)$.
\end{itemize}
By relating the Clifford entropy directly to the stabilizer entropy of the Choi state of the unitary channel (Theorem \ref{choi}), we are able to derive a non tight upper bound on the Clifford entropy (Theorem \ref{not_tight}). In fact, both the bound and the fact cannot it cannot be tight follow from the (non)existence of a symmetric informationally complete (SIC) set with a prescribed set of symmetries \cite{cuffaro2024quantumstatesmaximalmagic,Renes_2004}. It is notable that one can draw conclusions about dynamics from such considerations: it is in line with previous work showing that the study of SICs lies at the heart of finite dimensional quantum theory \cite{DeBrota2020}, and we remark that the effort to resolve the question of SIC existence has led to a remarkable marriage between physics and algebraic number theory \cite{Appleby2025}. Finally, although we cannot establish a tight upper bound analytically, we report the results of numerically optimizing the Clifford entropy for qudit dimensions $d=2,\dots,9$ (Figure \ref{fig:avg_max}), showing that as dimension increases, the average approaches the maximum.

Specializing to the $2$-Clifford entropy, 
\begin{align}
H_2(U) &= 1-\frac{1}{d^6}\sum_{\B{ab}}\left|\tr\Big(D^{\dagger}_{\B{a}}U D_{\B{b}}U^\dagger\Big)\right|^{4},
\end{align}
where $\{D_\B{a}\}$ are a set of Weyl-Heisenberg operators, we are able to establish an operational interpretation of our measure: a $T$-doped Clifford circuit of the form $U=C_1 T C_2 T \dots$ realizing a generic unitary $U$ must have depth at least
\begin{align}
t(U) \ge \frac{H_2(U)}{H_2(T)}
\end{align}
with probability approaching unity as the Hilbert space dimension $d$ grows large (Theorem \ref{operational_interp}). In order to arrive at this result, we characterize the Lipschitz constant of $H_2$ (Corollary \ref{H2_Lipschitz}) as well that of a related function diagnosing the failure of subadditivity under composition (Lemma \ref{GLipschitz}). Moreover, using the Weingarten calculus and a result from \cite{Cepollaro2025StabilizerEntropy}, we explicitly evaluate the average of $H_2$ over the unitary group, showing it goes as $1-O(d^{-2})$ (Theorem \ref{H2_haar}). Finally, we use these results to invoke a known theorem regarding concentration of measure on the unitary group (Theorem \ref{meckes}). We also provide numerical evidence that subadditivity violation is profoundly rare for generic unitaries (Figure \ref{fig:subadditivity}) suggesting that the bound on $t(U)$ is even more reliable that one would naively expect for low $d$.

There are a handful of existing measures of the magic of a quantum operation in the literature. For example, one may compare the Clifford entropy to the \emph{nonstablizing power} of a unitary $U$ \cite{Leone_2022} defined by allowing $U$ to act on all stabilizer states, and averaging their stabilizer entropies: while intuitive in its meaning, the NSP becomes nontrivial to calculate as $d$ grows large. Meanwhile, \cite{Wang_2019} introduces several measures of the magic of a quantum channel related to the nonnegativity of the discrete Wigner function of the channel's Choi state: our work can be seen as complementary, based as it is on the stabilizer entropy of the Choi state.

\section{Preliminaries}
 
\subsection{The Weyl-Heisenberg group}

For any natural number $d$, let $\mathbb{Z}_d=\{0,\dots,d-1\}$ be the set of congruence classes modulo $d$. We define the computational basis of a $d$-dimensional Hilbert space $\mathcal{H}_d$ as the orthonormal set $\{|k\rangle\}_{k\in\mathbb{Z}_{d}}$ such that the  \textit{clock} and \textit{shift} operators act respectively as
\begin{align}
   Z|k\rangle \equiv\omega^k |k\rangle, &&  X|k\rangle\equiv|k+1\rangle,
  \end{align}
where $\omega=e^{2\pi i /d}$ and all arithmetic should be understood modulo $d$. The clock and shift operators satisfy
\begin{align}
    X^d=Z^d=I, && X^kZ^l=\omega^{-kl}Z^lX^k.
\end{align}
In terms of them, we may define displacement operators as
\begin{align}
    D_\textbf{a}\equiv\tau^{a_1a_2}X^{a_1}Z^{a_2}, && \textbf{a}=(a_1,a_2)\in\mathbb{Z}_d\times\mathbb{ Z}_d,
\end{align}
where $\tau=-e^{i\pi/d}$. The displacement operators satisfy $D_{\textbf{a}}^\dagger = D_{-\textbf{a}}$ as well as
\begin{align}\label{HEalgebra}
    D_\textbf{a}D_\textbf{b}=\tau^{[\textbf{a,\textbf{b}]}}D_{\textbf{a}+\textbf{b}}=\omega^{[\textbf{a}, \textbf{b}]}D_{\textbf{b}}D_{\textbf{a}},
\end{align}
where $[\textbf{a},\textbf{b}]=a_1b_2-a_2b_1$ is the symplectic product on $\mathbb{Z}_d\times\mathbb{ Z}_d$ \cite{appleby2005}. 
The \emph{Weyl-Heisenberg group} WH$(d)$ is the group generated by  the displacement operators
\begin{equation}
   \text{WH}(d)\equiv\{\omega^sD_\textbf{a}\}\,,\quad\quad s\in\mathbb{Z}_d,\textbf{a}\in\mathbb{Z}_d\times\mathbb{ Z}_d.
\end{equation}
The order of the group is $d\times d^2$. In what follows, however, we will consider only the quotient group containing $+1$ phases, in which case the order of the group is $d^2$.

If we consider a composite system with Hilbert space $(\mathcal{H}_{d_L})^{\otimes n}$, where $d_L$ is the local dimension and $n$ is the number of qudits (so that $d= d_L^n$), we may define composite displacement operators as the tensor product of displacements on each subsystem \cite{Veitch_2012},
\begin{equation}
    D_{\textbf{a}_1\oplus\dots\oplus\textbf{a}_n}=D_{\textbf{a}_1}\otimes\dots\otimes D_{\textbf{a}_n}.
\end{equation}
 We must then consider symplectic indices valued in $\mathbb{Z}_{d_L}^{2n}$: $\B{a}=\B{a}_1 \oplus \dots \oplus \B{a}_k$. The symplectic product becomes $[\B{a}, \B{b}]=\sum_i [\B{a}_i, \B{b}_i]$, while the phase factors remain $\omega=e^{2\pi i/d_L}$ and $\tau=-e^{\pi i/d_L}$. The multiqudit WH group is then
\begin{equation}
    \text{WH}_n(d_L)\equiv\{\omega^sD_{\textbf{a}_1}\otimes\dots\otimes D_{\textbf{a}_n}\}\,,\quad\quad s\in\mathbb{Z}_{d_L},\textbf{a}_1,\dots,\textbf{a}_n\in\mathbb{Z}_{d_L}\times\mathbb{ Z}_{d_L},
\end{equation}
which similarly has order $d^2$ after quotienting by overall phase.

One may check that the WH operators form a unitary operator basis, satisfying the orthogonality relation $\tr(D_\textbf{a}^\dagger D_\textbf{b})=d\delta_{\textbf{ab}}$. Consequently, a state may be expanded as $\rho = \frac{1}{d}\sum_{\B{a}}\tr(D^\dagger_{\B{a}}\rho )D_{\B{a}}$. We shall call $\mathfrak{c}_{\B{a}}(\rho) =\tr(D_{\B{a}}^\dagger \rho)$ the \emph{characteristic function} of $\rho$.  Similarly, we can define the characteristic function of a linear map $\Phi$ on operators as
\begin{align}
\mathfrak{C}_{\B{ab}}(\Phi) = \frac{1}{d}\tr(D_{\B{a}}^\dagger 	\Phi(D_{\B{b}})),
\end{align}
so that the action of the map on a state can be expressed
\begin{align}
\sum_{\B{b}} \mathfrak{C}_{\B{ab}}(\Phi)\mathfrak{c}_{\B{b}}(\rho) &= \sum_{\B{b}} \frac{1}{d}\tr\Big(D_{\B{a}}^\dagger 	\Phi(D_{\B{b}})\Big)\tr(D_{\B{b}}^\dagger \rho)\\
&=\tr\left(D_{\B{a}}^\dagger \Phi\left(\frac{1}{d}\sum_{\B{b}}	\tr(D_{\B{b}}^\dagger \rho) D_{\B{b}}\right)\right)\\
&=\tr\Big(D_{\B{a}}^\dagger \Phi(\rho)\Big)=\mathfrak{c}_{\B{a}}(\Phi(\rho)).
\end{align}

\subsection{The stabilizer entropy}

The \emph{Clifford group} $\mathcal{C}_n(d_L)$ is defined as the normalizer of the WH group, that is, the set of unitaries which take displacement operators to displacement operators,
\begin{align}
\mathcal{C}_n(d_L) = \{C \in U \ | \ CD_{\B{a}}C^\dagger = \omega^s D_{\B{a}^\prime}\}.
\end{align}
The \emph{stabilizer states} are simultaneous eigenvectors of maximal abelian subgroups of the WH group \cite{Gross2021}. It follows that the Clifford group takes stabilizer states to stabilizer states. For a single qubit, the stabilizer states are the eigenvectors of the Pauli matrices, and the convex hull of the stabilizer states (the stabilizer polytope) forms an octahedron in the Bloch sphere. More generally in prime dimensions, there are $d(d+1)$ stabilizer states which may be organized into $d+1$ mutually unbiased bases.

The \emph{stabilizer entropy} (SE) \cite{Leone_2022,wang_stabilizer_2023} provides a variational characterization of stabilizer states. Given a pure quantum state $|\psi\rangle$, one can  define a probability distribution
\begin{equation}
 \chi_\B{a}(\psi)= \frac{1}{d}|\mathfrak{c}_{\B{a}}(\psi)|^2=\frac{1}{d}|\langle \psi | D_\textbf{a}^\dagger|\psi\rangle|^2\,,\quad\quad D_\textbf{a}\in \text{WH}_n(d_L),
\end{equation}
where $\mathfrak{c}_{\B{a}}(\psi)$ is the characteristic function of the state $\psi \equiv |\psi\rangle\langle \psi |$.
Each component of  $\chi_\B{a}(\psi)$ is clearly non-negative. To see that the components sum to 1, note that the swap operator can be expressed $\mathcal{S}	= \frac{1}{d}\sum_{\B{a}}D_{\B{a}}^\dagger \otimes D_{\B{a}}$ so that 
\begin{align}
\sum_{\B{a}}\chi_\B{a}(\psi)&=\sum_{\B{a}}\frac{1}{d}|\tr(D_{\B{a}}^\dagger \psi)|^2 =\tr\left( \left(\frac{1}{d}\sum_{\B{a}}D_{\B{a}}^\dagger\otimes D_{\B{a}} \right)(\psi \otimes \psi)\right) \nonumber \\
&= \tr(\mathcal{S}\psi \otimes \psi)=\tr(\psi^2)=1
\end{align}
iff $\psi$ is a pure state. We may then consider e.g., the order-$\alpha$ R\'enyi entropy of $\chi(\psi)$,
\begin{equation}
    M_\alpha(|\psi\rangle)\equiv\frac{1}{1-\alpha}\log{\sum_\textbf{a}\chi_\textbf{a}(\psi)^\alpha}-\log{d}.
\end{equation}
The stabilizer entropy $M_\alpha(|\psi\rangle)$ enjoys the following properties:
\begin{itemize}
    \item[(i)] Faithfulness: $M_\alpha(|\psi\rangle)=0$ iff $|\psi\rangle$ is a stabilizer state.
    \item[(ii)] Clifford invariance: $\forall C\in\mathcal{C}_n(d_L): M_\alpha(C|\psi\rangle)=M_\alpha(|\psi\rangle)$.
    \item[(iii)] Additivity under composition: $M_\alpha(|\psi\rangle\otimes|\phi\rangle)=M_\alpha(|\psi\rangle)+M_\alpha(|\phi\rangle)$.
    \item[(iv)] Upper bounded: $M_\alpha(|\psi\rangle)\leq\frac{1}{1-\alpha}\log{\frac{1}{d}\big(1+(d-1)(d+1)^{1-\alpha}\big)}$.
\end{itemize}
It has been shown by one of the authors \cite{cuffaro2024quantumstatesmaximalmagic} that a pure state $|\psi\rangle$ saturates the upper bound on $M_\alpha(|\psi\rangle)$ for $\alpha\ge2$ if and only if $|\psi\rangle$ is a WH-covariant symmetric informationally complete (SIC) fiducial state \cite{Renes_2004}. This can be easily motivated. Let $\{|\psi_\B{a}\rangle = D_\B{a}|\psi\rangle\}$ be a WH-covariant SIC set generated as the orbit of the fiducial $|\psi\rangle$. By the defining property of a SIC, we have 
\begin{align}
    |\langle \psi_\B{a} |\psi_\B{b}\rangle|^2 =  |\langle \psi D_\B{a}^\dagger D_\B{b}|\psi\rangle|^2 =|\langle \psi| D_\B{c}|\psi\rangle|^2 = \frac{d\delta_{\B{ab}}+1}{d+1},
\end{align}
so that the probability distribution $ \chi_\B{c}=\frac{1}{d}|\langle \psi |D_\textbf{c}|\psi\rangle|^2$ must be an almost flat, highly entropic  vector. We note that if $d$ is composite, one may define $M_\alpha(|\psi\rangle)$ with respect to either the single qudit WH group or multiqudit WH groups: the maximum is attained by a SIC fiducial covariant under the same group (if it exists).

Finally, we note that it is often convenient to work instead with linearized stabilizer entropies based on the $\alpha$-Tsallis entropy,
\begin{equation}
    M_{\text{lin}}^{(\alpha)}(\psi)=\frac{1}{\alpha-1}\left(1-d^{\alpha-1}\sum_{\textbf{a}}\chi_\textbf{a}(\psi)^\alpha\right),
\end{equation}
in particular, when $\alpha=2$, we have $M_{\text{lin}}^{(2)}(\psi)=1-\frac{1}{d}\sum_{\B{a}}|\tr(D_{\B{a}}^\dagger \psi)|^4$.

\section{A measure of Cliffordness}

The stabilizer entropy provides a measure of how close an arbitrary state is to a stabilizer state.  We now introduce an analogous measure, the \emph{Clifford entropy}, which quantifies how close an arbitrary unitary is to a Clifford unitary, and show that the Clifford entropy satisfies several key properties natural from a resource theory point of view \cite{QRT}. 

\begin{definition}[$\alpha$-Clifford entropy]
    Fixing a set of Weyl-Heisenberg displacement operators $\{D_\B{a}\}_{\B{a}\in \mathbb{Z}_d^{2n}}$, let $\mathfrak{C}_\B{ab}(U) = \frac{1}{d}\tr(D_\B{a}^\dagger U D_\B{b} U^\dagger)$ be the characteristic function of a unitary channel $U$. Letting $\mathcal{D}_{\B{ab}}(U)=|\mathfrak{C}_\B{ab}(U)|^2$, we define the $\alpha$-Clifford entropy to be
\begin{align}
H_{\alpha}(U)
&= \frac{1}{\alpha-1}\left(1-\frac{1}{d^{2}}\sum_{\B{ab}}\mathfrak{D}_{\B{ab}}(U)^{\alpha}\right)=\frac{1}{\alpha-1}\left(1-\frac{1}{d^{2(\alpha+1)}}\sum_{\B{ab}}\left|\tr\Big(D^{\dagger}_{\B{a}}U D_{\B{b}}U^\dagger\Big)\right|^{2\alpha}\right).
\end{align}
\end{definition}
\noindent To motivate this definition, we first establish a useful lemma: just as $\chi_\B{a}(\psi)=\frac{1}{d}|\mathfrak{c}_\B{a}|^2$ is a probability distribution iff $\psi$ is pure, $\mathfrak{D}_{\B{ab}}(U)$ is bistochastic iff $U$ is unitary.
\begin{lemma}
\label{bistochastic}
The matrix $\mathfrak{D}_{\B{ab}}(U)=	|\mathfrak{C}_\B{ab}(U)|^2$ is bistochastic iff $U$ is unitary.
\end{lemma}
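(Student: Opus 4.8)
The plan is to exploit that every entry of $\mathfrak{D}_{\B{ab}}(U)=|\mathfrak{C}_{\B{ab}}(U)|^2$ is manifestly non-negative, so that bistochasticity reduces to the two marginal conditions that all row sums and all column sums equal $1$. My strategy is to collapse each marginal into a single operator trace using the orthogonality and completeness of the displacement basis, and then read off what the resulting identity forces on $U$.

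For the row sums I would first rewrite $\mathfrak{C}_{\B{ab}}(U)=\frac1d\tr(F_\B{a}^\dagger D_\B{b})$ with $F_\B{a}=U^\dagger D_\B{a}U$ (by cyclicity), so that $\mathfrak{D}_{\B{ab}}(U)=\frac{1}{d^2}\tr(F_\B{a}^\dagger D_\B{b})\tr(D_\B{b}^\dagger F_\B{a})$. Summing over $\B{b}$ and applying the completeness identity $\tr(XY)=\frac1d\sum_\B{b}\tr(XD_\B{b})\tr(D_\B{b}^\dagger Y)$, which follows immediately from expanding $Y=\frac1d\sum_\B{b}\tr(D_\B{b}^\dagger Y)D_\B{b}$, collapses the row sum to $\frac1d\tr(F_\B{a}^\dagger F_\B{a})=\frac1d\tr(D_\B{a}^\dagger P D_\B{a}P)$, where $P=UU^\dagger$. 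An identical computation for the columns, writing $G_\B{b}=UD_\B{b}U^\dagger$, gives the column sum $\frac1d\tr(D_\B{b}^\dagger Q D_\B{b}Q)$ with $Q=U^\dagger U$. The forward direction is then immediate: if $U$ is unitary, $P=Q=I$ and each $D_\B{a}$ is unitary, so every row and column sum equals $\frac1d\tr(I)=1$.

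The converse is the crux, and the only place I anticipate any subtlety. Assuming $\mathfrak{D}(U)$ is bistochastic, I would extract $UU^\dagger=I$ from the row condition alone (the column condition symmetrically yielding $U^\dagger U=I$, though for a square operator either suffices). Evaluating the row-sum relation at $\B{a}=\B{0}$, where $D_\B{0}=I$, gives $\tr(P^2)=d$. Summing the same relation over all $\B{a}$ and invoking the twirl identity $\sum_\B{a}D_\B{a}^\dagger P D_\B{a}=d\,\tr(P)\,I$, itself a consequence of the operator-basis completeness together with $D_\B{a}^\dagger=D_{-\B{a}}$, gives $\tr(P)^2=d^2$, whence $\tr(P)=d$ by positivity of $P=UU^\dagger\ge 0$.

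The final step is to upgrade these two scalar constraints to an operator identity: with $P\ge 0$, $\tr(P)=d$, and $\tr(P^2)=d$, the equality case of the Cauchy–Schwarz bound $(\tr P)^2\le d\,\tr(P^2)$ is saturated, which forces $P$ to be a multiple of the identity, and then $\tr(P)=d$ pins down $P=I$, i.e. $UU^\dagger=I$. Since $U$ acts on the finite-dimensional $\mathcal{H}_d$, a one-sided inverse is two-sided and $U$ is unitary. I expect the main obstacle to be precisely this converse: recognizing that one need not wrestle with all $d^2$ row equations individually, but only the $\B{a}=\B{0}$ row together with their aggregate, and that it is the positivity of $UU^\dagger$ that promotes the pair of trace conditions to $P=I$ via the Cauchy–Schwarz equality case.
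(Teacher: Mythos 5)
Your forward direction is essentially the paper's own argument: both collapse each marginal sum via the completeness relation for the displacement basis, reducing the row (resp.\ column) sum to $\frac{1}{d}\tr\big(D^\dagger P D P\big)$ with $P=UU^\dagger$ (resp.\ $Q=U^\dagger U$), which equals $1$ when $U$ is unitary. Where you genuinely go beyond the paper is the converse. The paper's proof, as written, only establishes that unitarity implies bistochasticity and never addresses the ``only if'' half of the stated equivalence. Your argument fills that gap cleanly: evaluating the row condition at $\B{a}=\B{0}$ gives $\tr(P^2)=d$, aggregating over all $\B{a}$ via the twirl identity $\sum_{\B{a}}D_{\B{a}}^\dagger P D_{\B{a}}=d\,\tr(P)\,I$ gives $\tr(P)=d$ by positivity, and saturation of $(\tr P)^2\leq d\,\tr(P^2)$ then forces $P=I$, hence $U$ unitary in finite dimension. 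The two trace constraints you extract are exactly the right amount of information --- you do not need the individual row equations beyond $\B{a}=\B{0}$ and their sum --- and the positivity of $UU^\dagger$ is indeed the ingredient that upgrades them to an operator identity. The only caveat worth recording is that your converse presupposes $U$ is a square matrix on $\mathcal{H}_d$ (so that a one-sided inverse is two-sided), which is the implicit setting of the lemma; with that understood, your proof is correct and is strictly more complete than the one in the paper.
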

\begin{proof}
	\begin{align}
 \sum_{\B{a}} \mathfrak{D}_{\B{ab}}(U)  &= \frac{1}{d^2} \sum_{\B{a}}\tr\Big(D^{\dagger}_{\B{a}}U D_{\B{b}}U^\dagger\Big)\tr\Big(UD_{\B{b}}^\dagger U^\dagger D_{\B{a}}\Big)\\
 &=\frac{1}{d}\tr\Bigg(UD_{\B{b}}^\dagger U^\dagger \frac{1}{d}\sum_{\B{a}}\tr\Big(D^{\dagger}_{\B{a}}U D_{\B{b}}U^\dagger\Big)D_{\B{a}}\Bigg) \nonumber =\frac{1}{d}\tr\Big(UD_\B{b}^\dagger U^\dagger UD_\B{b}U^\dagger\Big)=1,
 \end{align}
 and by the same argument $ \sum_{\B{b}} \mathfrak{D}_{\B{ab}}(U)=1$.
\end{proof}
\noindent We are therefore justified in considering the order-$\alpha$ Tsallis entropies of the rows or columns of $\mathfrak{D}(U)$: $H_\alpha^{(\B{a}/\B{b})}(U) = \frac{1}{\alpha-1}\big(1-\sum_{\B{b/a}} \mathfrak{D}_{\B{ab}}(U)^\alpha\big)$ for $\alpha \in \mathbb{R}$. Taking both into account, we define the $\alpha$-Clifford entropy of a unitary to be $H_{\alpha}(U) = \frac{1}{2d^4}\sum_{\B{ab}}\big(H_\alpha^{(\B{a})}(U)+H_\alpha^{(\B{b})}(U)	\big)$, which is consistent with our earlier definition. While we specialize in this paper to Clifford entropies based on the Tsallis entropies, one could work instead with R\'enyi entropies.

We now prove that $H_\alpha(U)$ possess three natural properties one would like from a measure of Cliffordness: faithfulness, invariance under Clifford unitaries, and subadditivity under tensor products. 

\begin{theorem}[Faithfulness]
\label{faithfulness}
$H_\alpha(U)=0$ iff	$U$ is Clifford.
\end{theorem}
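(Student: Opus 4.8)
The plan is to reduce the statement to a combinatorial fact about the matrix $\mathfrak{D}(U)$ and then translate that fact back into a statement about conjugation of displacement operators. By Lemma~\ref{bistochastic}, $\mathfrak{D}_{\B{ab}}(U)$ is bistochastic whenever $U$ is unitary, so every row and every column is a genuine probability distribution and $H_\alpha(U)$ is a nonnegative average of the Tsallis entropies $H_\alpha^{(\B{a})}(U)$ and $H_\alpha^{(\B{b})}(U)$ of those distributions. Since the Tsallis entropy of a distribution is nonnegative and vanishes if and only if the distribution is a point mass (for $\alpha\neq 1$, with the usual convention that zero-probability terms are dropped), I would first argue that $H_\alpha(U)=0$ holds exactly when every row of $\mathfrak{D}(U)$ is a point mass. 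Combined with bistochasticity — each column also sums to one — a matrix whose rows are all $0/1$ point masses must place exactly one $1$ in each column as well, so $H_\alpha(U)=0$ if and only if $\mathfrak{D}(U)$ is a permutation matrix. This isolates the real content: showing that $\mathfrak{D}(U)$ is a permutation matrix precisely when $U$ is Clifford.

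For the easy direction, suppose $U$ is Clifford, so that $U D_\B{b} U^\dagger = \omega^{s(\B{b})} D_{\B{b}'}$ for some index map $\B{b}\mapsto\B{b}'$ and phase $s(\B{b})$. Then $\mathfrak{C}_\B{ab}(U)=\frac{1}{d}\tr(D_\B{a}^\dagger \omega^{s(\B{b})}D_{\B{b}'})=\omega^{s(\B{b})}\delta_{\B{a},\B{b}'}$ by the orthogonality relation $\tr(D_\B{a}^\dagger D_\B{b})=d\delta_{\B{ab}}$, so $\mathfrak{D}_\B{ab}(U)=\delta_{\B{a},\B{b}'}$. Because conjugation by $U$ is a bijection on the displacement operators, $\B{b}\mapsto\B{b}'$ is a bijection on indices, hence $\mathfrak{D}(U)$ is a permutation matrix and $H_\alpha(U)=0$.

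For the converse, assume $\mathfrak{D}(U)$ is a permutation matrix, so for each $\B{b}$ there is a unique $\B{a}=\pi(\B{b})$ with $\mathfrak{D}_{\pi(\B{b}),\B{b}}(U)=1$ and $\mathfrak{C}_\B{ab}(U)=0$ for all $\B{a}\neq\pi(\B{b})$. Expanding in the WH operator basis gives $U D_\B{b}U^\dagger = \sum_\B{a}\mathfrak{C}_\B{ab}(U)D_\B{a}=\lambda_\B{b}D_{\pi(\B{b})}$, where $\lambda_\B{b}=\mathfrak{C}_{\pi(\B{b}),\B{b}}(U)$ has unit modulus; thus $U$ sends each displacement operator to a scalar multiple of a single displacement operator. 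To match the stated definition of the Clifford group I would then feed this ansatz into the composition law \eqref{HEalgebra}: equating $U D_\B{b}D_\B{c}U^\dagger$ computed two ways forces $\pi(\B{b})+\pi(\B{c})=\pi(\B{b}+\B{c})$ and $[\pi(\B{b}),\pi(\B{c})]=[\B{b},\B{c}]$, so $\pi$ is a symplectic linear bijection, while the phases obey $\lambda_\B{b}\lambda_\B{c}=\lambda_{\B{b}+\B{c}}$, making $\lambda$ a character of $\mathbb{Z}_{d_L}^{2n}$ and hence valued in powers of $\omega$. This yields $U D_\B{b}U^\dagger=\omega^{s}D_{\pi(\B{b})}$ for all $\B{b}$, i.e. $U$ is Clifford.

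I expect the main obstacle to be this last step: upgrading the loose statement ``$U D_\B{b}U^\dagger$ is proportional to one displacement operator'' to the precise normalizer condition, with phase exactly $\omega^{s}$ and with $\pi$ a genuine symplectic automorphism. The bistochasticity argument handles the entropic part cleanly and the forward direction is a one-line trace computation; everything hinges on exploiting the WH group relations \eqref{HEalgebra} to pin down both the index map and the allowed phases.
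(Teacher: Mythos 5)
Your proof is correct and follows essentially the same route as the paper's: reduce $H_\alpha(U)=0$ via bistochasticity and nonnegativity of the Tsallis entropies to $\mathfrak{D}(U)$ being a permutation matrix, equivalently $\mathfrak{C}(U)$ being monomial, and identify that condition with $U$ being Clifford. Your final step — using the composition law \eqref{HEalgebra} to show the index map is a symplectic bijection and the phases form a character valued in powers of $\omega$ — actually supplies detail the paper omits, since its proof simply asserts that a monomial $\mathfrak{C}(U)$ makes $U$ Clifford without verifying the $\omega^s$ phase condition in its stated definition of the Clifford group.
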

\begin{proof}
On the one hand, suppose that $U$ is Clifford. Since by definition,  a Clifford unitary permutes displacement operators among themselves up to phase, its characteristic function must be
\begin{align}
\mathfrak{C}_{\B{ab}}(U) = \frac{1}{d}	\tr\big(D_{\B{a}}^\dagger C D_{\B{b}}C^\dagger\big)=\frac{1}{d}e^{i\phi_\B{b}}\tr\Big(D_{\B{a}}^\dagger D_{f(\B{b})}\Big)=e^{i\phi_{\B{b}}}\delta_{\B{a}, f(\B{b})},
\end{align}
for phases $e^{i\phi_\B{b}}$ and symplectic indices $f(\B{b})$.
Thus $\mathfrak{C}(U)$ must be a monomial matrix, having exactly one nonzero entry of modulus 1 per row and per column. It follows that $\mathfrak{D}(U)=|\mathfrak{C}(U)|^2$ must itself be a permutation matrix. Since the rows of a permutation matrix all have a single entry equal to 1, the rest being 0, $\sum_\B{ab} \mathfrak{D}_{\B{ab}}(U)^\alpha = d^2$, and so $H_\alpha(U)=0$. Conversely, if $H_\alpha(U)=0$, since Tsallis entropies are non-negative, $H_\alpha^{(\B{a})}(U)$ and $H_\alpha^{(\B{b})}(U)$ must be zero individually. This occurs when each row (and column) consists of a single 1, the rest of the entries being 0. Since therefore $\mathfrak{D}(U)$ is a permutation matrix, $\mathfrak{C}(U)$ must be a monomial matrix, and $U$ Clifford. 
\end{proof}

\begin{theorem}[Clifford invariance]
\label{clifford_invariance}
For Clifford unitaries $C_1, C_2$, $H_\alpha(C_1 U C_2)=H_\alpha(U)$. 	
\end{theorem}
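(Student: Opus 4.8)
The plan is to reduce everything to a relabeling of the matrix $\mathfrak{D}(U)$. Since $H_\alpha(U)$ depends on $U$ only through the scalar $\sum_{\B{ab}}\mathfrak{D}_{\B{ab}}(U)^\alpha$, it suffices to show this sum is unchanged when $U$ is replaced by $C_1 U C_2$.

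First I would expand the characteristic function of the conjugated channel. Using cyclicity of the trace,
\begin{align}
\mathfrak{C}_{\B{ab}}(C_1 U C_2) = \frac{1}{d}\tr\big(C_1^\dagger D_\B{a}^\dagger C_1\, U\, C_2 D_\B{b} C_2^\dagger\, U^\dagger\big).
\end{align}
The key step is to invoke the defining property of the Clifford group: conjugation by $C_1$ and $C_2$ sends each displacement operator to another displacement operator up to phase. Concretely, $C_1^\dagger D_\B{a}^\dagger C_1 = e^{i\theta_\B{a}} D_{\B{a}'}$ and $C_2 D_\B{b} C_2^\dagger = e^{i\phi_\B{b}} D_{\B{b}'}$, where the induced index maps $\B{a}\mapsto \B{a}'$ and $\B{b}\mapsto\B{b}'$ are bijections of $\mathbb{Z}_d^{2n}$ because Clifford conjugation is an automorphism of WH$_n(d_L)$ modulo phase. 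Writing $D_{\B{a}'} = D_{-\B{a}'}^\dagger$ and setting $\tilde{\B{a}} = -\B{a}'$ (still a bijection) restores the dagger convention of the characteristic function, giving
\begin{align}
\mathfrak{C}_{\B{ab}}(C_1 U C_2) = e^{i(\theta_\B{a}+\phi_\B{b})}\,\mathfrak{C}_{\tilde{\B{a}}\B{b}'}(U).
\end{align}

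Taking the modulus squared eliminates the phases, so $\mathfrak{D}_{\B{ab}}(C_1 U C_2) = \mathfrak{D}_{\tilde{\B{a}}\B{b}'}(U)$; that is, conjugation by Cliffords merely permutes the rows and columns of $\mathfrak{D}(U)$. Because $(\B{a},\B{b})\mapsto(\tilde{\B{a}},\B{b}')$ is a bijection of the full index set, relabeling the summation variables yields $\sum_{\B{ab}}\mathfrak{D}_{\B{ab}}(C_1 U C_2)^\alpha = \sum_{\B{ab}}\mathfrak{D}_{\B{ab}}(U)^\alpha$, and hence $H_\alpha(C_1 U C_2) = H_\alpha(U)$.

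I do not anticipate a genuine obstacle here; the argument is essentially bookkeeping. The only points requiring care are tracking the accumulated phases $\theta_\B{a},\phi_\B{b}$ (which harmlessly cancel under $|\cdot|^2$) and the dagger/sign manipulation $D_{\B{a}'}=D_{-\B{a}'}^\dagger$, together with confirming that the index maps are genuine bijections rather than mere functions — a fact that follows because each Clifford induces an invertible affine symplectic action on the index set.
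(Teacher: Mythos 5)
Your proof is correct and follows essentially the same route as the paper's: both arguments show that Clifford conjugation permutes the displacement operators up to phase, so that $\mathfrak{D}(C_1 U C_2)$ is obtained from $\mathfrak{D}(U)$ by row and column permutations, under which $\sum_{\B{ab}}\mathfrak{D}_{\B{ab}}^\alpha$ is invariant. The only cosmetic difference is that you manipulate the trace directly, whereas the paper packages the same fact as the statement that $\mathfrak{C}(C_1 U C_2)=\mathfrak{C}(C_1)\mathfrak{C}(U)\mathfrak{C}(C_2)$ with $\mathfrak{C}(C_1),\mathfrak{C}(C_2)$ monomial matrices.
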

\begin{proof}
We begin by observing that $\mathfrak{C}(C_1 U C_2)=\mathfrak{C}(C_1)\mathfrak{C}(U)\mathfrak{C}(C_2)$, where $\mathfrak{C}(C_1)$ and $\mathfrak{C}(C_2)$ are monomial matrices.   $\mathfrak{C}(C_1)$ enacts a phased row permutation of $\mathfrak{C}(U)$ and $\mathfrak{C}(C_2)$ enacts a phased column permutation of $\mathfrak{C}(U)$. Thus $\mathfrak{D}(C_1 U C_2)$ is related to $\mathfrak{D}(U)$ by row and column permutations. But $H_\alpha(U)$ is clearly invariant under row and column permutations, and so $H_\alpha(C_1 U C_2)=H_\alpha(U)$.
\end{proof}

\begin{theorem}[Subadditivity under tensor product]
\label{tp_subadditivity}
For $U\in \mathcal{U}(d_1)$ and $V \in \mathcal{U}(d_2)$, 
\begin{align}
H_\alpha(U\otimes V) = H_\alpha(U)+H_\alpha(V)-(\alpha-1)H_\alpha(U) H_\alpha(V).
\end{align}
\end{theorem}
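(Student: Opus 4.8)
The plan is to exploit the fact that on the composite Hilbert space $\mathcal{H}_{d_1}\otimes\mathcal{H}_{d_2}$ the displacement operators factorize as $D_{\B{a}_1\oplus\B{a}_2}=D_{\B{a}_1}\otimes D_{\B{a}_2}$, which forces the characteristic function of $U\otimes V$ to split into a product over the two tensor factors. Writing the composite symplectic indices as $\B{a}=\B{a}_1\oplus\B{a}_2$ and $\B{b}=\B{b}_1\oplus\B{b}_2$, I would first compute
\begin{align}
\mathfrak{C}_{\B{ab}}(U\otimes V)
&= \frac{1}{d_1 d_2}\tr\Big((D_{\B{a}_1}^\dagger\otimes D_{\B{a}_2}^\dagger)(U\otimes V)(D_{\B{b}_1}\otimes D_{\B{b}_2})(U^\dagger\otimes V^\dagger)\Big) \nonumber \\
&= \frac{1}{d_1}\tr\Big(D_{\B{a}_1}^\dagger U D_{\B{b}_1} U^\dagger\Big)\,\frac{1}{d_2}\tr\Big(D_{\B{a}_2}^\dagger V D_{\B{b}_2} V^\dagger\Big) = \mathfrak{C}_{\B{a}_1\B{b}_1}(U)\,\mathfrak{C}_{\B{a}_2\B{b}_2}(V),
\end{align}
using the multiplicativity of the trace over tensor factors. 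Taking squared moduli immediately yields the factorization of the bistochastic matrix, $\mathfrak{D}_{\B{ab}}(U\otimes V)=\mathfrak{D}_{\B{a}_1\B{b}_1}(U)\,\mathfrak{D}_{\B{a}_2\B{b}_2}(V)$. Note that this holds even when $d_1\neq d_2$, since only the tensor-product structure of the composite displacement operators is needed.

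The second step is to observe that, because the index sums run independently over the two factors, the sum defining $H_\alpha$ factorizes:
\begin{align}
\sum_{\B{ab}}\mathfrak{D}_{\B{ab}}(U\otimes V)^\alpha = \Big(\sum_{\B{a}_1\B{b}_1}\mathfrak{D}_{\B{a}_1\B{b}_1}(U)^\alpha\Big)\Big(\sum_{\B{a}_2\B{b}_2}\mathfrak{D}_{\B{a}_2\B{b}_2}(V)^\alpha\Big).
\end{align}
Writing $S_U=\sum_{\B{ab}}\mathfrak{D}_{\B{ab}}(U)^\alpha$ and similarly $S_V$, I would invert the definition of the Clifford entropy to read $S_U=d_1^2\big(1-(\alpha-1)H_\alpha(U)\big)$, and likewise $S_V=d_2^2\big(1-(\alpha-1)H_\alpha(V)\big)$.

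Finally I would substitute into $H_\alpha(U\otimes V)=\frac{1}{\alpha-1}\big(1-(d_1 d_2)^{-2}S_U S_V\big)$ and expand. Setting $x=(\alpha-1)H_\alpha(U)$ and $y=(\alpha-1)H_\alpha(V)$, one has $(d_1 d_2)^{-2}S_U S_V=(1-x)(1-y)=1-x-y+xy$, so that $H_\alpha(U\otimes V)=\frac{1}{\alpha-1}(x+y-xy)$, which rearranges at once to the claimed $H_\alpha(U)+H_\alpha(V)-(\alpha-1)H_\alpha(U)H_\alpha(V)$. I do not expect any genuine obstacle here: the only care required is tracking the dimension factors $d=d_1 d_2$ correctly. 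The pseudo-additive (rather than strictly additive) form is simply the familiar $q$-deformed additivity of Tsallis entropies, and since $H_\alpha\geq 0$ the correction term is non-positive for $\alpha>1$, which is precisely the subadditivity advertised in property (iii).
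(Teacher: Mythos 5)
Your proof is correct and follows essentially the same route as the paper's: factorize the characteristic function (hence $\mathfrak{D}$) over the tensor factors, factorize the sum, and substitute the inverted definition of $H_\alpha$ to obtain the $q$-deformed additivity identity. The only difference is that you write out explicitly the factorization step the paper calls ``straightforward to check.''
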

\begin{proof}
It is straightforward to check that $\mathfrak{C}(U \otimes V) = \mathfrak{C}(U) \otimes \mathfrak{C}(V)$ and $\mathfrak{D}(U\otimes V) = \mathfrak{D}(U) \otimes \mathfrak{D}(V)$. Consequently,
 \begin{align}
 H_{\alpha}(U\otimes V) &= \frac{1}{\alpha-1}\left(1- \frac{1}{(d_1 d_2)^2}\sum_{\B{a}_1, \B{a}_2; \B{b}_1, \B{b}_2} \mathfrak{D}_{\B{a}_1, \B{a}_2; \B{b}_1, \B{b}_2}(U\otimes V)^\alpha\right)\\
 &=\frac{1}{\alpha-1}\left(1- \left(\frac{1}{d_1^2}\sum_{\B{a}_1, \B{b}_1} \mathfrak{D}_{\B{a}_1, \B{b}_1}(U)^\alpha \right)\left(\frac{1}{d_2^2}\sum_{\B{a}_2, \B{b}_2}  \mathfrak{D}_{\B{a}_2, \B{b}_2}(V)^\alpha\right)\right)\\
 &=\frac{1}{\alpha-1}\Big(1- \big(1-(\alpha-1)H_\alpha(U)\big)\big(1-(\alpha-1)H_\alpha(V)\big) \Big),
 \end{align} 
 from which the result follows.
\end{proof}
Next, we establish a non-tight upper bound on the Clifford entropy of a unitary by relating it to the stablizer entropy of the unitary channel's Choi state.

\begin{theorem}\label{choi}
The Clifford entropy of a unitary channel $\Phi$ can be directly related to the stabilizer R\'enyi entropy of its corresponding Choi state $\rho_\Phi$ as
\begin{align}
H_{\alpha}(\Phi)
 =\frac{1}{\alpha-1}\big(1-e^{-(\alpha-1)\mathcal{M}_\alpha(\rho_\Phi)}\big).
\end{align}
\end{theorem}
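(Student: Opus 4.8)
The plan is to compute the stabilizer entropy of the Choi state directly and match it term by term to $H_\alpha(U)$. First I would fix the Choi state of the unitary channel $\Phi(\cdot)=U(\cdot)U^\dagger$ as the pure state $|\psi_U\rangle = (U\otimes I)|\Omega\rangle$, where $|\Omega\rangle = \frac{1}{\sqrt{d}}\sum_k |k\rangle|k\rangle$ is the maximally entangled state, so that $\rho_\Phi = |\psi_U\rangle\langle\psi_U|$ lives in dimension $D=d^2$. Since $\mathcal{M}_\alpha(\rho_\Phi)$ is built entirely from the characteristic function $\mathfrak{c}_{(\B a,\B b)}(\rho_\Phi) = \langle\psi_U|(D_\B a^\dagger \otimes D_\B b^\dagger)|\psi_U\rangle$ taken with respect to the composite WH group $\{D_\B a \otimes D_\B b\}$, the whole problem reduces to relating $|\mathfrak{c}_{(\B a,\B b)}(\rho_\Phi)|^2$ to $\mathfrak{D}_{\B{ab}}(U)$.

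Next I would evaluate that characteristic function. Pushing the $U\otimes I$ onto the displacement operators and invoking the standard maximally-entangled-state identity $\langle\Omega|(A\otimes B)|\Omega\rangle = \frac{1}{d}\tr(AB^T)$ gives $\mathfrak{c}_{(\B a,\B b)}(\rho_\Phi) = \frac{1}{d}\tr\!\big(U^\dagger D_\B a^\dagger U\,(D_\B b^\dagger)^T\big)$. The one nontrivial ingredient is the transpose: working from $X^T=X^{-1}$ and $Z^T=Z$, I would check that $D_\B b^T = D_{\tilde{\B b}}$ for a symplectic relabeling $\B b \mapsto \tilde{\B b}$ that carries no phase, the commutation factor $\tau^{2b_1b_2}$ cancelling against $\omega^{-b_1b_2}$ because $\tau^2=\omega$. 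Hence $(D_\B b^\dagger)^T$ is again a displacement operator, and by cyclicity of the trace $\mathfrak{c}_{(\B a,\B b)}(\rho_\Phi) = \mathfrak{C}_{\B a\B b'}(U)$ with $\B b'$ a bijective relabeling of $\B b$; taking moduli squared yields $|\mathfrak{c}_{(\B a,\B b)}(\rho_\Phi)|^2 = \mathfrak{D}_{\B a\B b'}(U)$, any residual phase being irrelevant once squared.

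Finally I would assemble the entropy. Using $\chi_{(\B a,\B b)}(\rho_\Phi) = D^{-1}|\mathfrak{c}_{(\B a,\B b)}(\rho_\Phi)|^2 = d^{-2}\mathfrak{D}_{\B a\B b'}(U)$ and summing over all indices, where the relabeling $\B b\mapsto\B b'$ is a bijection and thus leaves the sum unchanged, I obtain $\sum_{(\B a,\B b)}\chi_{(\B a,\B b)}(\rho_\Phi)^\alpha = d^{-2\alpha}\sum_{\B{ab}}\mathfrak{D}_{\B{ab}}(U)^\alpha = d^{2(1-\alpha)}\big(1-(\alpha-1)H_\alpha(U)\big)$, the last step being just the definition of $H_\alpha$. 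Substituting into $\mathcal{M}_\alpha(\rho_\Phi)=\frac{1}{1-\alpha}\log\sum\chi^\alpha - \log D$ with $D=d^2$, the $\log d$ contributions cancel and I am left with $\mathcal{M}_\alpha(\rho_\Phi) = -\frac{1}{\alpha-1}\log\big(1-(\alpha-1)H_\alpha(U)\big)$, which rearranges immediately to the claimed identity $H_\alpha(\Phi)=\frac{1}{\alpha-1}\big(1-e^{-(\alpha-1)\mathcal{M}_\alpha(\rho_\Phi)}\big)$.

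The main obstacle I anticipate is bookkeeping rather than conceptual: correctly tracking the transpose-induced relabeling of the displacement indices (and verifying it is a phase-free bijection of $\mathbb{Z}_d^{2n}$), together with keeping the dimension of the Choi space straight throughout, since $\rho_\Phi$ lives in dimension $D=d^2$ and so $\chi$ carries the prefactor $d^{-2}$ rather than $d^{-1}$ and $\mathcal{M}_\alpha$ subtracts $\log d^2$. Once the relabeling is established as a phase-free bijection, the modulus removes any remaining ambiguity and the rest is elementary algebra with the logarithms.
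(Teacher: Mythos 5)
Your proposal is correct and follows essentially the same route as the paper: both compute the characteristic function of the Choi state with respect to the bipartite WH group, reduce it to the channel's characteristic function $\mathfrak{C}_{\B{a}\B{b}'}(U)$ via the phase-free relabeling $\B{b}\mapsto(b_1,-b_2)$ induced by transposition/conjugation of $D_{\B{b}}$, and then match the $\alpha$-power sums and do the logarithm algebra. The only (cosmetic) difference is that you work in the vector picture with $|\Omega\rangle$ and the identity $\langle\Omega|(A\otimes B)|\Omega\rangle=\tfrac{1}{d}\tr(AB^{T})$, whereas the paper manipulates $\rho_\Phi=\tfrac{1}{d}\sum_{ij}\Phi(|i\rangle\langle j|)\otimes|i\rangle\langle j|$ directly to land on $\tfrac{1}{d}\tr\big(D_{\B{a}}^\dagger\Phi(D_{\B{b}}^{*})\big)$.
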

\begin{proof}
Recalling the Choi isomorphism between a quantum channel $\Phi$ and a state $\rho_\Phi$ on a doubled Hilbert space,
\begin{align}
\rho_{\Phi} &= \frac{1}{d}\sum_{ij} \Phi(|i\rangle\langle j|)\otimes |i\rangle\langle j|,
\end{align}
we can consider the characteristic function of $\rho_{\Phi}$ with respect to bipartite WH operators,
\begin{align}
	\mathfrak{c}_{\B{ab}}(\rho_\Phi)&=\tr\big((D_{\B{a}}^\dagger \otimes D_\B{b}^\dagger)\rho_\Phi\big) =\frac{1}{d}\sum_{ij}\tr\Big(D^\dagger_{\B{a}} \Phi(|i\rangle \langle j|)\Big)\tr(D^\dagger_{\B{b}}|i\rangle\langle j|)\\
	&=\frac{1}{d}\tr\Big(D^\dagger_{\B{a}} \Phi\Big(\sum_{ij}\langle j|D^\dagger_{\B{b}}|i\rangle \ |i\rangle \langle j|\Big)\Big)=\frac{1}{d}\tr\big(D^\dagger_{\B{a}}\Phi(D_{\B{b}}^*)\big).
\end{align}
Since for a unitary channel, $\rho_\Phi$ is pure, $\chi_{\B{ab}}(\rho_\Phi)=\frac{1}{d^2}|\mathfrak{c}_{\B{ab}}|^2=\frac{1}{d^4}|\tr\big(D^\dagger_{\B{a}}\Phi(D_{\B{b}}^*)\big)|^2$ will be a probability distribution. Meanwhile, $\mathfrak{D}_{\B{ab}}(\Phi)=\frac{1}{d^2}\left|\tr(D^{\dagger}_{\B{a}}\Phi( D_{\B{b}}))\right|^2$.
 Notice that since $X$ is real valued and $Z$ simply has roots of unity along its diagonal, $D_{a_1,a_2}^* \propto (X^{a_1} Z^{a_2})^*=X^{a_1}Z^{-a_2}\propto D_{a_1, -a_2}$, and so summing over all indices,
\begin{align}
    \sum_\B{ab} \chi_\B{ab}(\rho_\Phi)^\alpha = \frac{1}{d^{2\alpha}}\sum_\B{ab} \mathfrak{D}_{\B{ab}}(\Phi)^\alpha.
\end{align}
It follows that the stabilizer R\'enyi entropy of $\rho_\Phi$ can be expressed as
\begin{align}
    M_\alpha(\rho_\Phi) &= \frac{1}{1-\alpha}\log{\sum_\textbf{a}\chi_\textbf{ab}(\rho_\Phi)^\alpha}-\log{d^2}=\frac{1}{1-\alpha}\log{ \frac{1}{d^{2}}\sum_\B{ab} \mathfrak{D}_{\B{ab}}(\Phi)^\alpha},
\end{align}
from which our claim follows.
\end{proof}

\noindent A direct consequence of this relation is the following corollary.
\begin{corollary}
    A unitary channel achieves maximal Clifford entropy if and only if its Choi state achieves maximal bipartite stabilizer entropy.
\end{corollary}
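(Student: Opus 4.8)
The plan is to observe that the corollary is an immediate consequence of Theorem \ref{choi}, which expresses the Clifford entropy as a fixed, deterministic function of the stabilizer R\'enyi entropy of the Choi state. First I would introduce the shorthand $g(x) = \frac{1}{\alpha-1}\big(1 - e^{-(\alpha-1)x}\big)$, so that Theorem \ref{choi} reads simply $H_\alpha(\Phi) = g\big(\mathcal{M}_\alpha(\rho_\Phi)\big)$. The entire argument then reduces to establishing that $g$ is strictly increasing on the relevant domain, since a strictly monotone reparametrization preserves the location of extrema.

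Next I would verify this monotonicity directly by differentiating: $g'(x) = e^{-(\alpha-1)x}$, which is strictly positive for every real $\alpha \neq 1$ and every $x$, crucially \emph{independent} of the sign of $\alpha-1$. (The factor $\frac{1}{\alpha-1}$ and the exponent $-(\alpha-1)$ conspire so that the two sign flips cancel.) Hence $g$ is a strictly increasing bijection from its domain onto its range, and in particular $H_\alpha(\Phi)$ is maximized precisely when $\mathcal{M}_\alpha(\rho_\Phi)$ is maximized.

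I would then invoke the Choi isomorphism. The map $\Phi \mapsto \rho_\Phi$ is a bijection between unitary channels and their associated (pure, maximally entangled) Choi states. Since $H_\alpha(\Phi)$ depends on $\Phi$ only through the scalar $\mathcal{M}_\alpha(\rho_\Phi)$ via the strictly increasing function $g$, the unitary channel attaining the maximum of $H_\alpha$ over all unitary channels is exactly the one whose Choi state attains the maximum of the bipartite stabilizer entropy over all such Choi states. This gives the desired ``if and only if'' in both directions.

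I do not expect a genuine obstacle here; the only point requiring care, rather than real work, is pinning down the class over which ``maximal'' is taken. The maximization of the bipartite stabilizer entropy must be understood as ranging over the Choi states of unitary channels, not over arbitrary bipartite pure states. Because the correspondence is a bijection and $g$ is strictly monotone, no further input is needed, and the result follows purely from the functional relationship established in Theorem \ref{choi}.
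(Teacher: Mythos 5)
Your proposal is correct and matches the paper's (implicit) reasoning: the paper offers no explicit proof, simply declaring the corollary a ``direct consequence'' of Theorem \ref{choi}, and the strict monotonicity of $x \mapsto \frac{1}{\alpha-1}\big(1-e^{-(\alpha-1)x}\big)$ that you verify is precisely the step being taken for granted. Your remark about which class the maximization ranges over is a worthwhile clarification but does not change the argument.
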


For example, when $\alpha=2$, using the upper bound on the stabilizer entropy from \cite{cuffaro2024quantumstatesmaximalmagic} we have
\begin{align}
H_{2}(\Phi) &=1-e^{-\mathcal{M}_2(\rho_\Phi)} \leq 1 - \frac{2}{d^2+1},
\end{align}
with equality iff $\rho_\Phi$ is a SIC state covariant under the bipartite WH group. But in fact, this upper bound is \emph{not tight}, as we now show.

\begin{theorem}
    \label{not_tight}
	The upper bound on the Clifford entropy derived from considering the Choi state stabilizer entropy is not tight.
\end{theorem}
\begin{proof}
On the one hand, the Choi state of a unitary channel is maximally entangled. On the other hand, a state which achieves maximal bipartite stabilizer entropy must be a SIC state covariant under two copies of the WH group. It follows that every state of the SIC must be maximally entangled, since local operations leave entanglement invariant, and thus the purity of each reduced subsystem must be $1/d$. But this contradicts Lemma \ref{purity} (a known result \cite{PhysRevA.82.042308} which we reprove in Appendix \ref{avg_purity}): the average purity of a reduced subsystem of SIC on $\mathcal{H}_d \otimes \mathcal{H}_d$ must be $\frac{2d}{d^2+1}$. Since there can be no SIC which is covariant under the bipartite WH group which also contains a maximally entangled state, there is no unitary which achieves the upper bound on Clifford entropy.
\end{proof}
We observe, however, that as $d$ grows large, the average purity of a SIC subsystem goes to 0: thus as the dimension increases, a bipartite SIC state may be approximately maximally entangled, and the corresponding channel approximately unitary. It is therefore possible that the bound could be approximately attained, in the unlikely event that a SIC with the prescribed symmetries exists. All known SICs, however, are covariant only under the single-qudit WH group with the single exception of the Hoggar SIC, which is covariant under the three-qubit WH group. Indeed, \cite{Godsil_2009} provides the following lemma:
\begin{lemma}[Godsil and Roy]
Let $|\phi\rangle \in\mathbb{C}^{2n}$ and $\mathcal{P}_{n}$ the $n$-fold tensor product of the Pauli group (that is, the qubit WH group). Then the set
\begin{equation}
    \{P |\psi\rangle, P\in\mathcal{P}_{n}\}
\end{equation}
forms a SIC only for $n=1$ and $n=3$.
\end{lemma}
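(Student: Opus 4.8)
The plan is to exploit the special structure of the qubit case ($d_L=2$), where the displacement operators are Hermitian involutions, in order to convert the SIC condition into an over-determined system of sign equations whose consistency is controlled by a single number-theoretic constraint. First I would record the relevant facts about $\text{WH}_n(2)$: for $d_L=2$ one has $\tau=-i$ and $D_{\B{a}}^\dagger=D_{-\B{a}}=D_{\B{a}}$, so every $D_{\B{a}}$ is Hermitian with $D_{\B{a}}^2=I$, and two displacements commute or anticommute according to the parity of $[\B{a},\B{b}]$, with $D_{\B{a}}D_{\B{b}}=\gamma(\B{a},\B{b})D_{\B{a}+\B{b}}$ where $\gamma(\B{a},\B{b})\in\{\pm1\}$ when they commute and $\gamma(\B{a},\B{b})\in\{\pm i\}$ when they anticommute (this dichotomy is forced by Hermiticity of the product). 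Because each $D_{\B{a}}$ is Hermitian, the characteristic function $\mathfrak{c}_{\B{a}}(\psi)=\langle\psi|D_{\B{a}}|\psi\rangle$ is \emph{real}, and Pauli covariance of the candidate SIC $\{D_{\B{a}}|\psi\rangle\}$ together with equiangularity $|\langle\psi|D_{\B{a}}^\dagger D_{\B{b}}|\psi\rangle|^2=1/(d+1)$ collapses to $\mathfrak{c}_{\B{a}}(\psi)=\epsilon_{\B{a}}/\sqrt{d+1}$ for every $\B{a}\neq 0$, with a sign $\epsilon_{\B{a}}\in\{+1,-1\}$.

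Next I would impose purity. Expanding the fiducial projector in the operator basis gives $|\psi\rangle\langle\psi|=\tfrac1d\big(I+\sum_{\B{a}\neq 0}\mathfrak{c}_{\B{a}}D_{\B{a}}\big)$, which I would substitute into the idempotency relation $\rho^2=\rho$. The crucial simplification is that in $\big(\sum_{\B{a}}\mathfrak{c}_{\B{a}}D_{\B{a}}\big)^2$ only \emph{commuting} pairs survive: symmetrizing the double sum produces anticommutators $\{D_{\B{a}},D_{\B{b}}\}$, which vanish exactly when $\B{a},\B{b}$ anticommute, so each surviving term carries a real phase $\gamma\in\{\pm1\}$. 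The diagonal terms $D_{\B{a}}^2=I$ reproduce $\tr(\rho^2)=1$ automatically, while matching the coefficient of each $D_{\B{u}}$ with $\B{u}\neq 0$ yields the key identity
\[
\sum_{\substack{\B{v}\neq 0,\B{u}\\ [\B{v},\B{u}]=0}}\epsilon_{\B{v}}\,\epsilon_{\B{u}+\B{v}}\,\gamma(\B{v},\B{u}+\B{v})=(d-2)\sqrt{d+1}\,\epsilon_{\B{u}}.
\]

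I would then read off the contradiction. The left-hand side is a sum of $\pm1$'s (each $\gamma$ is real because $\B{v}$ and $\B{u}+\B{v}$ commute), hence an \emph{integer}, whereas the right-hand side equals $\pm(d-2)\sqrt{d+1}$. For $n=1$ we have $d-2=0$ and the index set is empty, so the identity holds vacuously, consistent with the known tetrahedral SIC. For $n\geq 2$, however, $d-2>0$ forces $\sqrt{2^n+1}$ to be rational, i.e.\ $2^n+1$ to be a perfect square; writing $2^n=(m-1)(m+1)$ with $m$ necessarily odd, both factors are powers of two differing by $2$, forcing $\{m-1,m+1\}=\{2,4\}$ and hence $n=3$ (the Hoggar case). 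Thus a Pauli-covariant SIC can exist only for $n\in\{1,3\}$.

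The step I expect to be the main obstacle is the careful derivation of the displayed identity: one must track the fourth-root-of-unity phases $\gamma(\B{a},\B{b})$ through the expansion of $\rho^2$, verify the anticommutator cancellation that removes every non-real contribution, and correctly handle the boundary terms $\B{v}\in\{0,\B{u}\}$ and the diagonal $\B{v}=\B{w}$. Once the identity is in hand the number-theoretic conclusion is immediate, so essentially all of the work lies in the bookkeeping that isolates the real, integer-valued left-hand side.
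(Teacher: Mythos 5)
The paper offers no proof of this lemma: it is quoted from Godsil and Roy \cite{Godsil_2009} and used as a black box, so there is no in-paper argument to compare yours against. Your integrality-obstruction proof is correct and self-contained. The key points all check out: for $d_L=2$ each $D_{\B{a}}$ is a Hermitian involution, so $\mathfrak{c}_{\B{a}}(\psi)=\langle\psi|D_{\B{a}}|\psi\rangle$ is real and the SIC condition forces $\mathfrak{c}_{\B{a}}=\epsilon_{\B{a}}/\sqrt{d+1}$ for $\B{a}\neq 0$; substituting $\psi=\tfrac1d\big(I+\sum_{\B{a}\neq 0}\mathfrak{c}_{\B{a}}D_{\B{a}}\big)$ into $\psi^2=\psi$, the anticommuting pairs cancel under symmetrization, the diagonal reproduces $\tr(\psi^2)=1$, and matching the coefficient of $D_{\B{u}}$ (the $2\mathfrak{c}_{\B{u}}$ cross term is what turns $d$ into $d-2$) yields exactly your displayed identity; each surviving $\gamma(\B{v},\B{u}+\B{v})$ is $\pm1$ because $[\B{v},\B{u}+\B{v}]=[\B{v},\B{u}]=0$ and a product of commuting Hermitian involutions is again a Hermitian involution, so the left-hand side is an integer while the right-hand side is $\pm(d-2)\sqrt{d+1}$; and $2^n=(m-1)(m+1)$ with both factors powers of two differing by $2$ indeed forces $n=3$, with $n=1$ handled vacuously since the index set is empty and $d-2=0$. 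Two minor remarks. First, the lemma asserts only the ``only if'' direction, so you need not exhibit the tetrahedral and Hoggar fiducials, though citing their existence makes the statement sharp. Second, your route is essentially the field-theoretic/integrality argument familiar from the WH-covariant SIC literature rather than Godsil and Roy's original combinatorial treatment; it is arguably more elementary, and it has the virtue of making the exceptional role of $n=3$ transparent as the unique solution of $2^n+1=m^2$.
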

\noindent While in \cite{Zhu_2010}, Zhu has proven that a group-covariant SIC in any prime dimension must be covariant with respect to the qudit Weyl-Heisenberg group, a proof that the same is true in composite dimensions is lacking despite numerical evidence in its favor.

Finally, being unable to establish a tight upper bound on the Clifford entropy, we can nevertheless estimate its maximum numerically. In particular, in Figure \ref{fig:avg_max}, we plot the largest value of $H_2$ achieved by numerically optimizing over the unitary group for different values of $d$. In the same plot, we show the average of $H_2$ over the unitary group (Theorem \ref{H2_haar} in Appendix \ref{appendixHaar}), showing that as $d$ grows, the maximum and the average approach each other.

\begin{figure}[h]
    \centering
    \includegraphics[width=0.75\textwidth]{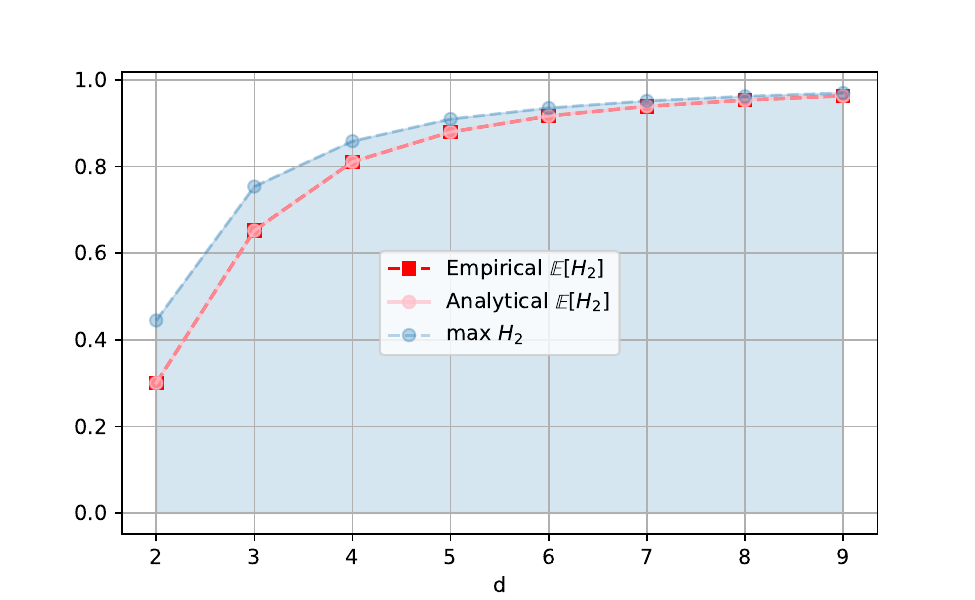}
    \caption{The average of $H_2$ estimated by sampling $25,000$ unitaries is plotted against the analytical value according to Theorem \ref{H2_haar}, with perfect agreement. Above appears an estimate of the largest attainable value of $H_2$ obtained by numerically maximizing $H_2$ over the unitary group using the BFGS optimizer. The optimizer was run $100$ times in each dimension.}
    \label{fig:avg_max}
\end{figure}

\section{Operational interpretation}

Having established the essential properties of the Clifford entropy, we now give our measure an operational interpretation. In particular, we demonstrate a direct link between the Clifford entropy and the so-called\textit{T-count}, the minimum number of non-Clifford gates needed in addition to Clifford gates to realize the unitary in question. To motivate this, suppose that $f(U)$ is a measure of Cliffordness which satisfies additivity under composition, that is, $f(UV) = f(U) + f(V)$. As in a magic state model of quantum computation, if one were to implement a unitary $U$ by a series of Clifford gates interpersed with some fixed non-Clifford gate, e.g., a $T$ gate, then
\begin{align}
f(U) = f(C_1 T C_2T \dots) = f(C_1) + f(T) + f(C_2) + f(T) + \dots = t f(T).
\end{align}
The ratio $t=f(U)/f(T)$ has a immediate interpretation: it is just the $T$-count of a $T$-doped Clifford circuit implementation of the unitary. Relaxing the requirement of strict additivity to \emph{sub}additivity under composition, i.e., $f(UV) \leq f(U) + f(V)$, we would have
\begin{align}
f(U) = f(C_1 T C_2 T \dots) \leq f(C_1) + f(TC_2 T \dots) \leq f(T) + f(C_2 T \dots) \leq t f(T),
\end{align}
so that $t \ge f(U)/f(T)$: the ratio provides a lower bound on the $T$-count necessary to realize $U$. Relaxing even this requirement, one could demand simply that $f(U)$ satisfies subadditivity most of the time, or more specifically that, as $d$ grows large, for a generic unitary $U$, $t \ge f(U)/f(T)$ with probability approaching unity. This latter statement is what we will be able to prove about $H_2(U)$ by appealing to a \emph{concentration of measure} result. In particular, we will make use of Theorem 5.17 of \cite{Meckes2019} (compare \cite{Low2009, Ledoux2005}), about the concentration properties of Haar measure on the classical compact groups.
\begin{theorem}[Meckes]
    \label{meckes}
Given $n_1, \dots, n_k \in \mathbb{N}$, let $X= G_{n_1} \times \cdots \times G_{n_k}$, where for each of the $n_i$, $G_{n_i}$ is one of $\mathcal{SO}(n_i), \mathcal{SO}^-(n_i), \mathcal{SU}(n_i), \mathcal{U}(n_i)$ or $\mathcal{S}p(2n_i)$. Let $X$ be equipped with the $L^2$-sum of Hilbert-Schmidt metrics on the $G_{n_i}$. Suppose that $f:X\rightarrow \mathbb{R}$ is $L$-Lipschitz, and that $\{U_j\in G_{n_j}: 1 \leq j \leq k\}$ are independent, Haar-distributed random matrices. Then for each $t>0$,
\begin{align}
\text{Pr}\Big(f(U_1, \dots, U_k)\ge \mathbb{E}\big[f(U_1, \dots, U_k)\big] + t \Big) \leq e^{-(n-2)t^2/24L^2},
\end{align}
where $n=\min\{n_1, \dots, n_k\}$.
\end{theorem}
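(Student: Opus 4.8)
The plan is to prove the inequality by the standard route through Ricci curvature. The governing principle is that for a compact Riemannian manifold, equipped with its normalized volume as a probability measure, a uniform positive lower bound $\mathrm{Ric}\ge K$ forces a Gaussian concentration inequality $\Pr(f\ge\mathbb{E}f+t)\le e^{-cKt^2/L^2}$ for every $L$-Lipschitz $f$ (Gromov--L\'evy, or equivalently a Bakry--\'Emery logarithmic Sobolev inequality followed by the Herbst argument). The whole theorem then reduces to two tasks: (a) producing a good lower bound on the Ricci curvature of each classical group in the Hilbert--Schmidt metric, and (b) propagating that bound through the product $X=G_{n_1}\times\cdots\times G_{n_k}$.

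For the curvature computation I would use the closed form $\mathrm{Ric}(X,X)=\tfrac14\sum_j\|[X,Y_j]\|^2$ valid for a bi-invariant metric on a compact Lie group, where $\{Y_j\}$ is an orthonormal basis of the Lie algebra. On $\mathfrak{so}(n)$, $\mathfrak{su}(n)$, and $\mathfrak{sp}(2n)$ this sum is the negative Killing form, a positive multiple of the Hilbert--Schmidt form of order $n$; carrying the normalization through gives $\mathrm{Ric}\ge\tfrac{n-2}{4}$ in each case, with the orthogonal group being the tightest (for $\mathfrak{su}(n)$ one in fact gets $\tfrac{n}{2}$). Feeding $K=\tfrac{n-2}{4}$ into the concentration inequality, together with the universal constant of the log-Sobolev-to-concentration step and the conversion from median to mean, yields the tail $e^{-(n-2)t^2/24L^2}$; the precise (nonoptimal) value $24$ is exactly what these estimates produce after the Hilbert--Schmidt normalization. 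The non-identity coset $\mathcal{SO}^-(n)$ is isometric to $\mathcal{SO}(n)$ via left translation by a fixed reflection, so it inherits the same bound.

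To handle the product $X$ I would invoke the fact that the Ricci curvature of a Riemannian product is the direct sum of the factor curvatures, so $\mathrm{Ric}_X\ge\min_j\tfrac{n_j-2}{4}=\tfrac{n-2}{4}$ with $n=\min_j n_j$; equivalently, log-Sobolev inequalities tensorize with the minimum of the constants. Because the metric on $X$ is the $L^2$-sum of the factor metrics, an $L$-Lipschitz function on $X$ is controlled by the same $L$, so no loss is incurred, and the smallest factor dictates the $(n-2)$ in the exponent.

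The genuine obstacle is $\mathcal{U}(n)$, which is not semisimple: $\mathfrak{u}(n)=\mathfrak{su}(n)\oplus\mathbb{R}(iI)$, and along the central direction $iI$ the curvature vanishes, so the Gromov--L\'evy hypothesis fails outright. What rescues the argument is that the \emph{effective} size of this flat direction shrinks with $n$. Writing $U=e^{i\theta}V$ with $V\in\mathcal{SU}(n)$, the phase is determined only modulo the $n$-th roots of unity, since $\mathcal{U}(n)\cong(\mathcal{SU}(n)\times\mathcal{U}(1))/\mathbb{Z}_n$; moving along $iI$ changes $\det U$ at $n$ times the rate, so one reaches $\mathcal{SU}(n)$ after Hilbert--Schmidt distance $O(1/\sqrt n)$, and $\theta$ may be confined to an arc of that length. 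I would therefore split an $L$-Lipschitz $f$ into its restriction to the special-unitary slice, which concentrates at the rate inherited from the curvature bound on $\mathcal{SU}(n)$, plus a residual phase contribution whose total variation is $O(L/\sqrt n)$ and is hence absorbed. Making this splitting rigorous — controlling the metric distortion of the coordinates $(V,\theta)$ and verifying that the $\mathbb{Z}_n$ quotient really does confine the phase to a short arc — is the crux; by comparison the curvature computations and the product reduction are routine once single-group concentration is in hand.
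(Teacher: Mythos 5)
The paper does not actually prove this statement: it is imported verbatim as Theorem 5.17 of Meckes's monograph, which the text simply cites. Your sketch is, in essence, Meckes's own argument---Ricci lower bounds for the bi-invariant Hilbert--Schmidt metric on each factor feeding a Bakry--\'Emery log-Sobolev inequality and the Herbst argument, tensorization over the product with the minimum constant, and a separate treatment of $\mathcal{U}(n)$ via the quotient $(\mathcal{SU}(n)\times \mathcal{U}(1))/\mathbb{Z}_n$ confining the flat central direction to an arc of Hilbert--Schmidt length $O(1/\sqrt{n})$---and it is correct at the level of detail offered, with the loose ends you acknowledge (tracking the nonoptimal constant $24$ and making the $\mathcal{U}(n)$ splitting rigorous) being precisely the technical content of the cited reference.
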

\noindent Recall that a function $f(U)$, e.g., on $\mathcal{U}(d)$, is $L$-Lipschitz with respect to the Hilbert-Schmidt norm when $\forall U \neq V \in \mathcal{U}(d): |f(U) - f(V)| \leq L \lVert U -V \rVert_F$. In other words, a Lipschitz function cannot change too quickly. The intuition behind Theorem \ref{meckes} is that Haar measure on high-dimensional compact groups is highly concentrated: in particular, any set of unitaries that is not extremely rare lies close to almost all unitaries. When a function is Lipschitz, this geometric concentration is transferred to the distribution of the function itself. Consequently, typical fluctuations of $f(U)$ about its average are of order $L/\sqrt{n}$.

We will use Theorem \ref{meckes} to bound the probability that subadditivity is violated, and thus safeguard an operational interpretation of the Clifford entropy. In order to do so, however, we will need first to prove several intermediate results, ultimately characterizing the Lipschitz constant not only for the Clifford entropy (Corollary \ref{H2_Lipschitz}) but also for a related function (Lemma \ref{GLipschitz}) which diagnoses the failure of subadditivity. Moreover, we must characterize the average of the Clifford entropy over the unitary group (Theorem \ref{H2_haar}). To see why, consider the following lemma:

\begin{lemma}
\label{subadditivity}
Let $f(U)$ be a function on $\mathcal{U}(d)$, and $g(U,V)= f(UV)-f(U)-f(V)$. If $g$ is $L_g$-Lipschitz, then
\begin{align}
\text{Pr}\big(f \text{ violates subadditivity}\big)&\leq  2\exp \left(-(d-2)\epsilon^2/24L_g^2\right),
\end{align}
where $\epsilon = \mathbb{E}_U[f]$.
\end{lemma}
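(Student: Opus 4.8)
The plan is to apply the concentration inequality of Theorem \ref{meckes} to the function $g$ on the product group $\mathcal{U}(d) \times \mathcal{U}(d)$, after identifying the subadditivity-violation event with a one-sided tail event for $g$ and pinning down its mean. First I would note that, for a randomly drawn pair $(U,V)$, the statement that $f$ violates subadditivity is precisely the event $g(U,V) = f(UV) - f(U) - f(V) > 0$.

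The crucial step is then to evaluate $\mathbb{E}_{U,V}[g]$ when $U, V$ are independent and Haar-distributed. Here I would invoke the translation invariance of Haar measure: since $U$ is Haar-distributed and independent of $V$, the product $W = UV$ is again Haar-distributed, so $\mathbb{E}_{U,V}[f(UV)] = \mathbb{E}_W[f(W)] = \mathbb{E}_U[f] = \epsilon$. Combined with $\mathbb{E}[f(U)] = \mathbb{E}[f(V)] = \epsilon$, this yields $\mathbb{E}_{U,V}[g] = \epsilon - \epsilon - \epsilon = -\epsilon$. This identification of the mean with $-\epsilon$ is exactly what converts the abstract concentration bound into the advertised dependence on $\epsilon = \mathbb{E}_U[f]$, so that the violation event is a deviation of size $\epsilon$ above the mean.

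Next I would set up Theorem \ref{meckes} with $X = \mathcal{U}(d) \times \mathcal{U}(d)$ equipped with the $L^2$-sum of Hilbert-Schmidt metrics, so that $k = 2$, $n_1 = n_2 = d$, and hence $n = \min\{n_1, n_2\} = d$. By hypothesis $g$ is $L_g$-Lipschitz with respect to this product metric (the content of Lemma \ref{GLipschitz} in the relevant case). Writing the violation event as
\begin{align}
\text{Pr}\big(g(U,V) > 0\big) = \text{Pr}\big(g(U,V) - \mathbb{E}[g] > \epsilon\big),
\end{align}
I would apply the bound with $t = \epsilon$. Since $-g$ is also $L_g$-Lipschitz, applying Theorem \ref{meckes} to both $g$ and $-g$ produces the two-sided estimate $\text{Pr}(|g - \mathbb{E}[g]| \geq \epsilon) \leq 2e^{-(d-2)\epsilon^2/24L_g^2}$; since $\{g > 0\} \subseteq \{|g - \mathbb{E}[g]| \geq \epsilon\}$, the claimed inequality follows, the factor of $2$ arising from this symmetric (two-sided) application.

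The main obstacle, and really the only nonroutine point, is the mean computation $\mathbb{E}_{U,V}[g] = -\epsilon$: everything hinges on the fact that $UV$ inherits the Haar distribution, so that the ``extra'' term $f(UV)$ contributes exactly $+\epsilon$ rather than something larger. The remaining ingredients are mechanical once the Lipschitz constant $L_g$ is supplied by Lemma \ref{GLipschitz} and the product-group concentration of Theorem \ref{meckes} is in hand; the only care required is to confirm that $L_g$ is measured with respect to the same $L^2$-sum of Hilbert-Schmidt metrics that Theorem \ref{meckes} demands.
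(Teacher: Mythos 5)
Your proposal is correct and follows essentially the same route as the paper: identify the violation event with the tail event $\{g \geq 0\}$, use Haar invariance to show $\mathbb{E}_{U,V}[g]=-\epsilon$, and apply the two-sided version of Theorem \ref{meckes} on $\mathcal{U}(d)\times\mathcal{U}(d)$ with $n=d$ and $t=\epsilon$. The only cosmetic difference is that you compute the mean by noting $UV$ is itself Haar-distributed, whereas the paper conditions on $U$ and averages over $V$ first; these are the same invariance argument.
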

\begin{proof}
    Let $f(U)$ be a measure of Cliffordness, and define $g(U,V)= f(UV)-f(U)-f(V)$. Then
\begin{align}
   \text{Pr}\big(f \text{ violates subadditivity}\big)&=   \text{Pr}\big(f(UV)\ge f(U)+f(V)\big)\\
   &=\text{Pr}(g \ge 0)\\
   &=\text{Pr}\big(g -\mathbb{E}_{U,V}[g]\ge -\mathbb{E}_{U,V}[g]\big)\\
   &\leq \text{Pr}\big(\big|g-\mathbb{E}_{U,V}[g]\big|\ge -\mathbb{E}_{U,V}[g]\big),
\end{align}
where the last follows from the fact that the former condition is a special case of the latter. But by the invariance of Haar measure
\begin{align}
	\mathbb{E}_{V}[g]&= \mathbb{E}_{V}[f(UV)]-\mathbb{E}_{V}[f(U)]-\mathbb{E}_{V}[f(V)]\\
	&=\mathbb{E}_V[f(V)]-f(U)-\mathbb{E}_{V}[f(V)]\\
	&=-f(U),
\end{align}
so that $\mathbb{E}_{U,V}[g]=-\mathbb{E}_U[f]=-\epsilon$. We conclude
\begin{align}
\text{Pr}\big(f \text{ violates subadditivity}\big)&\leq\text{Pr}\big(\big|g-\mathbb{E}_{U,V}[g]\big|\ge \epsilon \big).
\end{align}
Supposing that $g$ is $L_g$-Lipschitz, we have the result by extending Theorem \ref{meckes} to a two-sided bound.
\end{proof}
\noindent In particular, for the $2$-Clifford entropy, we have the following implication.
\begin{lemma}
    \label{putting_it_together}
\begin{align}
\text{Pr}\big(H_2 \text{ violates subadditivity}\big)&\leq  2\exp \left(-\frac{d(d-2)}{3072\pi^2} + O(1)\right).
\end{align}
\end{lemma}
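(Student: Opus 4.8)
The plan is to obtain this as a direct specialization of Lemma \ref{subadditivity} to the case $f=H_2$, feeding in the two quantitative inputs established elsewhere in the paper: the Haar average of $H_2$ (Theorem \ref{H2_haar}) and the Lipschitz constant of the subadditivity-defect function $g(U,V)=H_2(UV)-H_2(U)-H_2(V)$ (Lemma \ref{GLipschitz}). Lemma \ref{subadditivity} already furnishes the bound $2\exp\!\big(-(d-2)\epsilon^2/24L_g^2\big)$ with $\epsilon=\mathbb{E}_U[H_2]$, so the whole task reduces to substituting these two constants and tracking carefully which corrections collapse into the advertised $O(1)$.

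First I would record the average: Theorem \ref{H2_haar} gives $\mathbb{E}_U[H_2]=1-O(d^{-2})$, hence $\epsilon=1-O(d^{-2})$ and therefore $\epsilon^2=1-O(d^{-2})$ as well. Next I would record the Lipschitz constant. From Lemma \ref{GLipschitz}, $g$ is $L_g$-Lipschitz in the $L^2$-sum Hilbert--Schmidt metric on $\mathcal{U}(d)\times\mathcal{U}(d)$ with $L_g^2=128\pi^2/d$. This in turn follows from the Lipschitz constant of $H_2$ itself (Corollary \ref{H2_Lipschitz}): writing $\|U_1V_1-U_2V_2\|_F\le\|U_1-U_2\|_F+\|V_1-V_2\|_F$ by unitary invariance, each of the three terms of $g$ contributes at most $L_{H_2}$ to each partial derivative, and a Cauchy--Schwarz step relating the sum of the two partials to the $L^2$-sum metric yields $L_g\le 2\sqrt{2}\,L_{H_2}$, so that $L_g^2=8L_{H_2}^2$.

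With these in hand the exponent becomes
\begin{align}
\frac{(d-2)\,\epsilon^2}{24\,L_g^2}
= \frac{(d-2)\,\epsilon^2}{24\cdot 128\pi^2/d}
= \frac{d(d-2)}{3072\pi^2}\,\epsilon^2
= \frac{d(d-2)}{3072\pi^2}\big(1-O(d^{-2})\big)
= \frac{d(d-2)}{3072\pi^2} - O(1).
\end{align}
The key arithmetic observation is that the headline constant $3072=24\cdot128$ is pinned down entirely by $L_g^2=128\pi^2/d$, and that the $O(d^{-2})$ correction coming from $\epsilon^2$, once multiplied by the prefactor $d(d-2)/(3072\pi^2)=O(d^2)$, contributes only an $O(1)$ term. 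Inserting this into the two-sided Meckes bound of Lemma \ref{subadditivity} then gives $2\exp\!\big(-\tfrac{d(d-2)}{3072\pi^2}+O(1)\big)$, as claimed.

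The main obstacle is not in this final assembly, which is essentially bookkeeping, but in the inputs it consumes: pinning down the \emph{exact} constant $128\pi^2/d$ in $L_g$ (rather than a mere order of magnitude) is what fixes the factor of $3072$, and any slack propagated from Corollary \ref{H2_Lipschitz} through the product-group gradient estimate would directly corrupt the stated constant. I would therefore be most careful to verify that the constant carried from the sharp Lipschitz estimate for $H_2$ through the bound $L_g^2=8L_{H_2}^2$ is exactly $128\pi^2/d$, and to confirm that the Haar-average correction is genuinely $O(d^{-2})$ so that its contribution to the exponent is $O(1)$ and not larger.
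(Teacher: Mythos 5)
Your proposal is correct and follows essentially the same route as the paper: specialize Lemma \ref{subadditivity} to $f=H_2$, insert $\epsilon=\mathbb{E}_U[H_2]=1-O(d^{-2})$ from Theorem \ref{H2_haar} and $L_g=2\sqrt{2}\cdot 4\pi/\sqrt{d}$ (so $L_g^2=128\pi^2/d$) from Corollary \ref{H2_Lipschitz} and Lemma \ref{GLipschitz}, and observe that the $O(d^{-2})$ correction to $\epsilon^2$ is absorbed into the $O(1)$ term of the exponent. The arithmetic $3072=24\cdot 128$ and the bookkeeping are exactly as in the paper's proof.
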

\begin{proof}
    We employ several results proven in Appendix \ref{appendixC}. By Corollary \ref{H2_Lipschitz}, $H_2(U)$ is Lipschitz with respect to the Hilbert-Schmidt norm on $\mathcal{U}(d)$ with Lipschitz constant $L=4\pi/\sqrt{d}$. Thus by Lemma \ref{GLipschitz}, $G_2(U, V) = H_2(UV) - H_2(U) - H_2(V)$ is Lipschitz with respect to the product Hilbert-Schmidt norm on $\mathcal{U}(d)\times \mathcal{U}(d)$ with Lipschitz constant $L_G = 2\sqrt{2} L = 8\pi\sqrt{2/d}$. Moreover, by Theorem \ref{H2_haar}, $\mathbb{E}_U[H_2(U)]=1-O\left(\frac{1}{d^2}\right)$. The claim then follows directly from Lemma \ref{subadditivity}.
\end{proof}

\noindent Marshalling this result, we can at last connect the Clifford entropy to the $T$-count of a $T$-doped Clifford realization of a unitary $U$.
\begin{theorem}
    \label{operational_interp}
    Consider a set of universal gates which consists in a collection of Clifford unitaries  and a single non-Clifford gate $T$. A $T$-doped Clifford circuit realizing a generic unitary gate $U$ must have depth at least 
    \begin{equation}
        t(U) \geq \frac{H_2(U)}{H_2(T)},
    \end{equation}
   with probability at least $\left[1-2\exp\left(-d(d-2)/3072\pi^2 +O(1)\right)\right]^{H_p(U)/H_p(T)}$. As $d$ grows large, this probability approaches unity.
\end{theorem}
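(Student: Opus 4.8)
The plan is to promote the heuristic telescoping argument of Section IV into a rigorous probabilistic bound, using Lemma \ref{putting_it_together} to control the chance that subadditivity fails at any single composition. I would begin by fixing an arbitrary $T$-doped Clifford realization $U = C_1 T C_2 T \cdots C_t T$ containing $t$ copies of the non-Clifford gate $T$ (a trailing Clifford may be absorbed into the last factor). The target is the deterministic implication that, \emph{whenever} $H_2$ is subadditive at every composition occurring in this product, one has $H_2(U) \le t\,H_2(T)$ and therefore $t \ge H_2(U)/H_2(T)$; note that $H_2(T) > 0$ by faithfulness (Theorem \ref{faithfulness}), so the ratio is well defined and the bound nonvacuous.

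To obtain this implication I would telescope explicitly. Setting the tail unitaries $A_k = C_k T C_{k+1} T \cdots C_t T$ with $A_{t+1} = I$, so that $U = A_1$ and $A_k = C_k T A_{k+1}$, Clifford invariance (Theorem \ref{clifford_invariance}) gives $H_2(A_k) = H_2(C_k T A_{k+1}) = H_2(T A_{k+1})$, while subadditivity gives $H_2(T A_{k+1}) \le H_2(T) + H_2(A_{k+1})$. Iterating for $k = 1, \dots, t$ and using $H_2(A_{t+1}) = H_2(I) = 0$ yields $H_2(U) \le t\,H_2(T)$, at the cost of exactly one invocation of subadditivity per $T$-gate.

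For the probabilistic step, each such invocation is precisely the event controlled by Lemma \ref{putting_it_together}, so it fails with probability at most $p = 2\exp\!\big(-d(d-2)/3072\pi^2 + O(1)\big)$. Analyzing a depth-optimal realization, whose $T$-count is $\lceil H_2(U)/H_2(T)\rceil$, there are $\lceil H_2(U)/H_2(T)\rceil$ subadditivity events, and, treating them as independent generic instances, the probability that all of them hold, and hence that the bound is valid, is at least $(1-p)^{H_2(U)/H_2(T)}$, as claimed. Finally, since $p = \exp(-\Theta(d^2))$ decays far faster than the exponent $H_2(U)/H_2(T)$ grows, Bernoulli's inequality gives $(1-p)^{H_2(U)/H_2(T)} \ge 1 - p\,H_2(U)/H_2(T) \to 1$ as $d \to \infty$.

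The hard part is the probabilistic bookkeeping of the previous paragraph, and it is worth being candid about it. Lemma \ref{putting_it_together} bounds subadditivity violation for two \emph{independent Haar-random} arguments, whereas in the telescoping the factor $T$ is fixed and the partial products $A_{k+1}$ are correlated with the earlier factors and with $U$ itself; strictly, the clean product $(1-p)^{H_2(U)/H_2(T)}$ is justified only under the genericity assumption that each composition behaves like an independent generic pair, and the honest fallback is the union bound $\Pr \ge 1 - \lceil H_2(U)/H_2(T)\rceil\,p$, which still tends to unity because $\lceil H_2(U)/H_2(T)\rceil\,p \to 0$. A secondary subtlety is matching the number of subadditivity events, namely the actual $T$-count of the circuit under analysis, to the reported exponent $H_2(U)/H_2(T)$; this is why one analyzes a depth-optimal realization, for which the two coincide up to rounding. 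Both points are quantitative rather than structural, so the concentration estimate carries the conclusion through regardless of which reading one adopts.
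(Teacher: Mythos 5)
Your proposal follows essentially the same route as the paper: telescope the product $U = C_1 T \cdots C_{t}T$ using Clifford invariance and subadditivity, then control each subadditivity invocation via Lemma \ref{putting_it_together} and multiply the success probabilities. The caveats you flag --- that the lemma applies to independent Haar-random pairs while the partial products here are correlated and involve the fixed gate $T$, and that the number of subadditivity events is the $T$-count rather than the ratio $H_2(U)/H_2(T)$ --- are genuine, but they are equally unaddressed in the paper's own (terser) proof, so your treatment is if anything more careful.
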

\begin{proof}
    We take $U$ to be realized by a $T$-doped circuit of the form $U= C_1T\dots C_{t(U)}T$ so that $H_2(U)=H_2(C_1T\dots C_{t(U)}T)$. If we assume subadditivity, that is, $H_2(AB) \leq H_2(A)+H_2(B)$ at each step of the circuit, recalling that $H_2$ is invariant under Clifford unitaries and that $H_2(C)=0$ for any Clifford unitary, we have
       \begin{equation}
        t(U)\geq\frac{H_2(U)}{H_2(T)}.
    \end{equation}
    But subadditivity will not always be satisfied. Nevertheless, Corollary \ref{putting_it_together} places a bound the probability of failure. Since we must apply subadditivity at each of the $H_2(U)/H_2(T)$ steps in the circuit, we have
    \begin{equation}
      \text{Pr}(\text{success})\geq\left[1-2\exp\left(-d(d-2)/3072\pi^2 +O(1)\right)\right]^{H_p(U)/H_p(T)}.
    \end{equation}
\end{proof}
\noindent In fact, the interpretation of $H_2(U)/H_2(T)$ as bounding the $T$-count is more reliable in low dimension than our theorem would suggest. Figure \ref{fig:subadditivity} displays the results of numerically estimating the rate of subadditivity failure. Already by $d=3$, the rate is vanishingly small for Haar distributed unitaries.
\begin{figure}[h]
    \centering
    \includegraphics[width=0.75\textwidth]{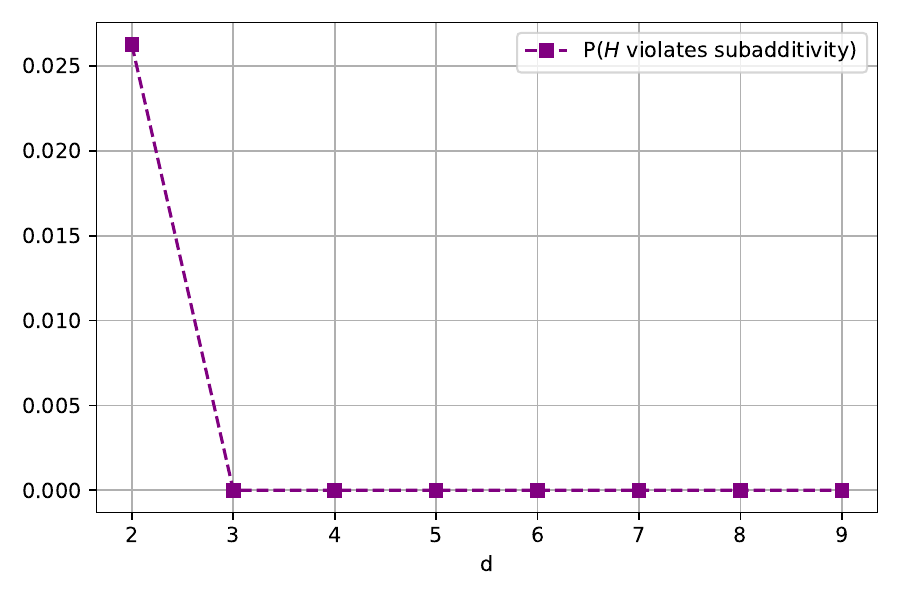}
    \caption{The frequency of subadditivity violation as a function of dimension estimated by sampling $25,000$ pairs of unitaries $U,V$ and calculating the proportion of time that $H_2(UV) \ge H_2(U) + H_2(V)$. This estimate was calculated separately $25$ times and then averaged. After $d=3$, the value is zero up to machine precision.}
    \label{fig:subadditivity}
\end{figure}

\section{Conclusion}

In this work, we have introduced the \emph{Clifford entropy}, a novel measure of how ``magical'' a unitary is. It straightforwardly generalizes the stabilizer entropy for states insofar as both entropies are constructed from probability distributions derived from the expansion of states (or unitaries) in terms of Weyl-Heisenberg operators. We have demonstrated that the Clifford entropy satisfies faithfulness, invariance under Clifford unitaries, and subadditivity under tensor products, and established a not tight upper bound on its value by relating the Clifford entropy of a unitary to the stabilizer entropy of the Choi state of the corresponding channel. On the one hand, the Choi state must be maximally entangled, but on the other hand, to achieve maximal stabilizer entropy (and so maximal Clifford entropy), the Choi state must be a SIC fiducial state covariant under two copies of the WH group. As these two constraints conflict, the bound is not tight, an intriguing instance of the properties of SIC sets implying a result about unitary dynamics. Theorems aside, we provide numerical estimates of the maximum value of the $2$-Clifford entropy in low qudit dimensions, as well as an analytical result characterizing its average of the unitary group. Finally, we employ a concentration of measure theorem to provide an operational spin to our measure: as the dimension grows large, with probability approaching 1, the $T$-count of a $T$-doped Clifford circuit realizing a Haar random unitary $U$ is lower bounded by the ratio $H_2(U)/H_2(T)$. In order to establish this, we first characterize the Lipschitz constants of $H_2$ and a related function which monitors the failure of $H_2$'s subadditivity, characterizing the former in terms of its directional derivative, and the latter in terms of the former. In fact, numerical evidence suggests that our theorem is not as strong as it could be: subadditivity violation is exceedingly rare even in low dimensions.

Going forward, it would be fruitful to relate the Clifford entropy explictly to the nonstabilizing power of \cite{Leone_2022}, and more generally to examine the interplay between the Clifford entropy and the stabilizer entropy: can we make precise the extent to which high Clifford entropy leads to high stabilizer entropy production (on average)? Similarly, how might the Clifford entropy relate to the measures introduced in \cite{Wang_2019}: can they be shown to bound each other? Those measures apply to general quantum channels: thus it would be promising to extend our definition of the Clifford entropy to handle general CPTP maps. We conjecture that when one considers channels whose Choi state have fixed purity, maximal Clifford entropy will be achieved by a higher-rank generalization of a SIC set \cite{Appleby2007} which have a much more flexible existence. In the same vein, can one show that the Clifford entropy is a magic monotone in the sense of \cite{PhysRevA.110.L040403}, where one allows not just Clifford unitaries, but more general Clifford operations? Finally, can one relate the Clifford entropy to measures of scrambling? In a sense, our measure diagnoses scrambling in the discrete phase space: Clifford unitaries permute displacement operators among themselves, while their antipodes according to the Clifford entropy scramble them up as much as possible. We hope that a productive dialogue can open here between phase space notations of scambling and other measures such as out-of-time correlators (OTOCs).

\begin{acknowledgements}
This research was supported in part by the National Science Foundation through grants NSF-2210495 and OSI-2328774. Several key insights in this paper arose in conversation with ChatGPT 5.2.
\end{acknowledgements}

\bibliographystyle{abbrv}
\bibliography{Bibliography}

\appendix 

\section{Average purity of SIC subsystems}
\label{avg_purity}

\begin{lemma}
\label{purity}
The average purity of a reduced subsystem of SIC on $\mathcal{H}_d \otimes \mathcal{H}_d$ is
\begin{align}
\frac{1}{d^4}\sum_i\tr(\rho_i^2)	=\frac{2d}{d^2+1}.
\end{align}
\end{lemma}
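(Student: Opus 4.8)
The plan is to exploit the fact that a SIC is a complex projective $2$-design, which collapses the average over the $d^4$ SIC states into the Haar second moment encoded by the projector onto the symmetric subspace. Throughout I write $\mathcal{H}=\mathcal{H}_d\otimes\mathcal{H}_d$ for the full space, so that $D\equiv\dim\mathcal{H}=d^2$ and the SIC consists of $D^2=d^4$ rank-one projectors $\psi_i=|\psi_i\rangle\langle\psi_i|$. Labelling the two tensor factors of $\mathcal{H}$ by $A$ and $B$, the reduced states are $\rho_i=\tr_B(\psi_i)$.

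First I would rewrite the purity of each reduced state as the expectation of a \emph{partial} swap. Using the standard identity $\tr(\rho_i^2)=\tr\big[(\rho_i\otimes\rho_i)\,\mathcal{S}_A\big]$, where $\mathcal{S}_A$ swaps the two copies of $\mathcal{H}_A$, and pulling the partial trace over the two $B$ factors outside, one obtains
\begin{align}
\tr(\rho_i^2)=\tr\big[\psi_i^{\otimes 2}\,(\mathcal{S}_A\otimes I_B)\big],
\end{align}
where $\psi_i^{\otimes 2}$ and $W\equiv\mathcal{S}_A\otimes I_B$ both act on the doubled space $\mathcal{H}^{\otimes 2}$, with $I_B$ the identity on the two copies of $\mathcal{H}_B$.

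Next I would invoke the $2$-design property of the SIC \cite{Renes_2004}, namely $\tfrac{1}{D^2}\sum_i\psi_i^{\otimes 2}=\tfrac{2}{D(D+1)}\,\Pi_{\mathrm{sym}}$, where $\Pi_{\mathrm{sym}}=\tfrac12(I+\mathbb{S})$ projects onto the symmetric subspace of $\mathcal{H}^{\otimes 2}$ and $\mathbb{S}=\mathcal{S}_A\otimes\mathcal{S}_B$ is the full swap. Averaging the previous display over $i$ then yields
\begin{align}
\frac{1}{d^4}\sum_i\tr(\rho_i^2)=\frac{1}{D(D+1)}\big(\tr W+\tr(W\mathbb{S})\big).
\end{align}
The two remaining traces are elementary: since $\tr\mathcal{S}_A=\tr\mathcal{S}_B=d$ while $\tr I_A=\tr I_B=d^2$, we get $\tr W=(\tr\mathcal{S}_A)(\tr I_B)=d^3$, and using $\mathcal{S}_A^2=I_A$ so that $W\mathbb{S}=I_A\otimes\mathcal{S}_B$, we get $\tr(W\mathbb{S})=(\tr I_A)(\tr\mathcal{S}_B)=d^3$. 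Substituting $D=d^2$ gives $\tfrac{1}{d^4}\sum_i\tr(\rho_i^2)=\tfrac{2d^3}{d^2(d^2+1)}=\tfrac{2d}{d^2+1}$, as claimed.

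The one genuinely delicate step — and the place I would take the most care — is the bookkeeping of the four tensor factors of $\mathcal{H}^{\otimes 2}=(\mathcal{H}_A\otimes\mathcal{H}_B)^{\otimes 2}$ when factoring $\mathbb{S}=\mathcal{S}_A\otimes\mathcal{S}_B$ and identifying the partial swap $W$; the correct placement of swaps versus identities is exactly what forces the two traces to coincide. The appeal to the $2$-design property is the only nontrivial input, but it is a standard characterization of SICs; should a self-contained argument be preferred, one can instead derive it directly from the equiangularity condition $|\langle\psi_i|\psi_j\rangle|^2=(D\delta_{ij}+1)/(D+1)$ together with informational completeness, at the cost of a somewhat longer computation.
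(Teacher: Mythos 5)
Your proof is correct and follows essentially the same route as the paper's: both reduce the average purity to a trace against the symmetric projector via the $2$-design property and evaluate the resulting swap traces, arriving at $\tfrac{2d^3}{d^2(d^2+1)}$. The only cosmetic difference is that you keep the computation on the full doubled space with the partial swap $\mathcal{S}_A\otimes I_B$, whereas the paper first takes the partial trace over the two $B$ factors and then inserts the swap on the reduced system; the two bookkeepings are equivalent.
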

\begin{proof}
Because the SIC projectors $\{\Pi_\B{a}\}$ form a minimal unweighted complex projective 2-design \cite{Renes_2004}, they satisfy $\frac{1}{n^2} \sum_i \Pi_i^{\otimes 2}  = \frac{2}{n(n+1)}\Pi_{\text{sym}^2}$ where $n$ is the dimension of the Hilbert space. Taking the partial trace on both sides of subsystems 2 and 4, using the notation $\tr_2(\Pi_i)=\rho_{i,1}$, we find $\frac{1}{n^2} \sum_i \rho_{i,1} \otimes \rho_{i,3}=\frac{2}{n(n+1)}\tr_{24}(\Pi_{\text{sym}^2} )$.
But $\rho_{i, 1}=\rho_{i,3}$, so that multiplying from the left by a swap operator $\mathcal{S}$ and taking the trace yields,
\begin{align}
    \label{avg_pur_eq}
\frac{1}{n^2}\sum_i\tr(\mathcal{S}\rho_{i,1}^{\otimes 2})=\frac{1}{n^2}\sum_i\tr(\rho_{i,1}^2)=	\frac{2}{n(n+1)}\tr(\mathcal{S} \tr_{24}(\Pi_{\text{sym}^2} )).	
\end{align}
Since for a SIC in $\mathcal{H}_d \otimes \mathcal{H}_d$, $n=d^2$, we have $\Pi_{\text{sym}^2} = \frac{1}{2}(I_{d^4}+\mathcal{S})$, so that  $\tr_{24}(\Pi_{\text{sym}^2})=\frac{1}{2}\Big(d^2 I_d \otimes I_d+d\sum_{k,l} |l\rangle\langle k| \otimes |k\rangle\langle l|\Big)$ which yields $\tr(\mathcal{S}	\tr_{24}(\Pi_{\text{sym}^2}))=d^3$. We conclude that the RHS of Eq. \!\ref{avg_pur_eq} is $\frac{2d^3}{d^2(d^2+1)}=\frac{2d}{d^2+1}$ from which the result follows.
\end{proof}

\section{Haar average}\label{appendixHaar}

\begin{theorem}
\label{H2_haar}
    The Haar average of $H_2(U)$ over $\mathcal{U}(d)$ is
\begin{equation}
    \mathbb{E}_U[H_2(U)]=\begin{dcases}
        1 - \frac{3 \left( d^4 - 4d^2 + 2 \right)}{\, d^4\,(d^2 - 4) }, &\text{odd }d\\
        1-\frac{3 \left( d^4 - 10d^2 + 10 \right)}{\, d^2\,(d^2 - 9)\, (d^2 - 1)},& \text{even }d\\
        1 - \frac{4 \left( d^4 - 9 d^2 + 6 \right)}{\, d^4\,(d^2 - 9)},& d=2^n\text{ (qubits)}.
    \end{dcases}
\end{equation}
\end{theorem}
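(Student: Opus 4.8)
The plan is to compute the Haar average of $H_2(U)$ directly from its definition. Writing
\begin{align}
\mathbb{E}_U[H_2(U)] = 1 - \frac{1}{d^{6}}\sum_{\B{ab}}\mathbb{E}_U\left[\left|\tr\Big(D^{\dagger}_{\B{a}}U D_{\B{b}}U^\dagger\Big)\right|^{4}\right],
\end{align}
the task reduces to evaluating, for each pair of indices $\B{a},\B{b}$, the fourth moment $\mathbb{E}_U|\tr(D^{\dagger}_{\B{a}}U D_{\B{b}}U^\dagger)|^4$. Expanding the modulus-to-the-fourth as a product of four traces, this becomes an expectation of a degree-$(4,4)$ monomial in the entries of $U$ and $U^\dagger$, which is exactly the object computed by the Weingarten calculus. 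First I would apply the Weingarten formula for $\mathbb{E}_U[U_{i_1j_1}\cdots U_{i_4j_4}\overline{U_{k_1l_1}}\cdots\overline{U_{k_4l_4}}]$, which produces a sum over pairs of permutations $\sigma,\tau\in S_4$ weighted by the Weingarten function $\mathrm{Wg}(\sigma\tau^{-1},d)$, with index contractions dictated by $\sigma$ and $\tau$.

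The key simplification is that the index contractions collapse into traces of products of displacement operators, so the answer organizes itself in terms of quantities like $\tr(D_{\B{a}}^\dagger D_{\B{a}}) = d$ and products $\tr(D_{\B{b}}D_{\B{b}}^\dagger\cdots)$ together with the symplectic phase relations \eqref{HEalgebra}. Summing over all $\B{a},\B{b}$ then converts these into completeness-type sums; the crucial identity here is that $\sum_{\B{a}} D_{\B{a}}\, O\, D_{\B{a}}^\dagger = d^2\,\tr(O)\,\frac{I}{d}$ for any operator $O$ (the WH group forms a unitary 1-design), and its higher analogues. I expect each of the $|S_4|^2$ Weingarten terms to reduce, after the index sums, to a monomial in $d$, so that $\mathbb{E}_U[H_2(U)]$ becomes $1$ minus a rational function of $d$. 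I would also invoke the result of \cite{Cepollaro2025StabilizerEntropy}, as advertised in the introduction, to handle or cross-check the bookkeeping of these permutation sums.

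The main obstacle — and the reason the final answer splits into three cases — is that the displacement-operator sums are \emph{not} dimension-independent: they depend on the arithmetic of $\mathbb{Z}_{d_L}$, specifically on how many pairs $(\B{a},\B{b})$ satisfy degenerate symplectic conditions such as $[\B{a},\B{b}]=0$ or $\B{a}=\pm\B{b}$, and on the order-$2$ structure of the group. For odd $d$ the squaring map on $\mathbb{Z}_d^{2n}$ is a bijection and the phase factor $\tau$ behaves uniformly, whereas for even $d$ (and especially $d=2^n$, where $-1 \equiv d-1$ interacts specially with the $\tau=-e^{i\pi/d_L}$ convention) additional index coincidences arise and change the counting. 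The hard part will therefore be carefully partitioning the double sum $\sum_{\B{ab}}$ according to these congruence conditions and tracking the parity-dependent phase contributions, ensuring that each case yields the stated rational function.

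Once the parity bookkeeping is settled, I would verify the three expressions against the numerical Haar average reported in Figure \ref{fig:avg_max}, and confirm the stated large-$d$ asymptotic: expanding each case, the leading correction is $1 - 3/d^2 + O(d^{-4})$ for generic $d$ and $1 - 4/d^2 + O(d^{-4})$ for qubits, so that in all cases $\mathbb{E}_U[H_2(U)] = 1 - O(d^{-2})$, matching the claim used in Lemma \ref{putting_it_together}.
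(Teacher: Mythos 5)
Your proposal is correct and follows essentially the same route as the paper: both reduce the computation to a fourth-moment Weingarten average, recognize that the permutation-contracted sums over displacement operators (equivalently $\tr(QT_\pi)$ for $Q=\frac{1}{d^2}\sum_{\B{a}}(D_{\B{a}}\otimes D_{\B{a}}^\dagger)^{\otimes 2}$) are the dimension-parity-sensitive ingredient, and both import their evaluation from \cite{Cepollaro2025StabilizerEntropy}. The paper merely packages the calculation more compactly by writing $H_2(U)=1-\frac{1}{d^2}\tr(QU^{\otimes 4}QU^{\dagger\otimes 4})$ before averaging, rather than working entrywise.
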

\begin{proof}
The 2-Clifford entropy can be expressed
\begin{align}
H_2(U) &= 1-\frac{1}{d^6}\sum_{\B{ab}}\left|\tr\Big(D^{\dagger}_{\B{a}}U D_{\B{b}}U^\dagger\Big)\right|^{4}.
\end{align}
Defining the operator
\begin{equation}
    Q\equiv\frac{1}{d^2}\sum_{\textbf{a}}(D_\textbf{a}\otimes D_\textbf{a}^\dagger)^{\otimes2} = Q^\dagger,
\end{equation}
it is straightforward to see that the $H_2(U)$ can be written compactly as
\begin{equation}
    H_2(U)=1-\frac{1}{d^2}\tr(QU^{\otimes4}QU^{\dagger\otimes4}).
\end{equation}
\noindent We can then employ Weingarten calculus \cite{Collins2006,Collins_2022} to show that
\begin{align}
    \mathbb{E}_U[H_2(U)]&=1-\frac{1}{d^2}\tr\left(Q\mathbb{E}_U[U^{\otimes4}QU^{\dagger\otimes4}]\right)\\&=1-\frac{1}{d^2}\tr\left(Q\sum_{\pi,\sigma\in S_4}\text{Wg}_{\pi\sigma}\,\tr(QT_\pi)\,T_\sigma\right)\\&=1-\frac{1}{d^2}\sum_{\pi,\sigma\in S_4}\text{Wg}_{\pi\sigma}\tr(QT_\sigma)\tr(QT_\pi),
\end{align}
where $\text{Wg}_{\pi\sigma}$ is the Weingarten matrix defined as the (pseudo)inverse of the Gram matrix $G_{\pi\sigma}\equiv d^{l(\pi^{-1}\sigma)}$. Here $l(\pi)$ is the number of cycles of the permutation $\pi\in S_4$. Appendix D in \cite{Cepollaro2025StabilizerEntropy} provides expressions for $\tr(Q T_\pi)$ for all $\pi \in S_4$ from which the average of $H_2(U)$ can be explicitly computed. In particular, we note that $\mathbb{E}_U[H_2(U)]=1-O\left(\frac{1}{d^2}\right)$.
\end{proof}

\section{Bounding subadditivity violation}\label{appendixC}
We begin with two preliminary lemmas.
\begin{lemma}
    \label{dt_abs_bound}
    Let $f:\mathbb{R}\rightarrow \mathbb{C}$ be differentiable. Then
    \begin{align}
\left|\frac{d}{dt}\big|f(t)\big|^4\right| \leq 4 |f(t)|^3 \left|\frac{d}{dt}f(t)\right|.
    \end{align}
\end{lemma}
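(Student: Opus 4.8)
The plan is to avoid differentiating the modulus $|f(t)|$ directly---which may fail to be differentiable at the zeros of $f$---and instead to exploit the smoother representation $|f(t)|^4 = \big(f(t)\overline{f(t)}\big)^2$. Since $f$ is differentiable and complex conjugation is smooth, this expression is differentiable everywhere on $\mathbb{R}$, so no separate case analysis at the zeros of $f$ is required.

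First I would apply the product and chain rules to $|f|^4 = (f\bar f)^2$. Writing $\frac{d}{dt}(f\bar f) = f'\bar f + f\bar f' = f'\bar f + \overline{f'\bar f} = 2\,\mathrm{Re}(f'\bar f)$ and using $f\bar f = |f|^2$, this yields
\begin{align}
\frac{d}{dt}|f|^4 = 2(f\bar f)\,\frac{d}{dt}(f\bar f) = 4|f|^2\,\mathrm{Re}\big(f'\bar f\big).
\end{align}
The final step is then to take absolute values and bound. Since $|\mathrm{Re}(z)| \le |z|$ for any $z \in \mathbb{C}$, and $|f'\bar f| = |f'|\,|\bar f| = |f'|\,|f|$, we obtain
\begin{align}
\left|\frac{d}{dt}|f|^4\right| = 4|f|^2\,\big|\mathrm{Re}(f'\bar f)\big| \le 4|f|^2\,|f'|\,|f| = 4|f|^3\,|f'|,
\end{align}
which is exactly the claimed inequality.

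There is no genuine obstacle here, as the computation is elementary; the only subtlety worth flagging is that one should resist the temptation to write $|f|^4 = (|f|)^4$ and differentiate the outer fourth power, since $|f|$ is not differentiable at points where $f$ vanishes. The representation $(f\bar f)^2$ sidesteps this difficulty entirely, and one checks that at any zero of $f$ both sides of the inequality correctly reduce to $0$, so the bound holds globally without exception.
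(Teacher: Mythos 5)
Your proof is correct and follows essentially the same route as the paper's: both write $|f|^4$ as the square of $r = f\bar f = |f|^2$, compute $\frac{d}{dt}r = 2\,\mathrm{Re}(f'\bar f)$ by the product rule, and then bound via $|\mathrm{Re}(z)|\le|z|$. Your added remark about avoiding differentiating $|f|$ itself at zeros of $f$ is a nice point of care, but the underlying computation is identical to the paper's.
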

\begin{proof}
Letting $r(t) = |f(t)|^2$, we have
\begin{align}
\frac{d}{dt}|f(t)|^4 &= \frac{d}{dt} r(t)^2 = 2r(t)\frac{d}{dt} r(t).
\end{align}
But
\begin{align}
\frac{d}{dt} r(t) &= \frac{d}{dt} f(t) f(t)^* = f(t)\frac{d}{dt} f(t)^* + f(t)^* \frac{d}{dt} f(t) = 2 \Re\Big( f(t)^* \frac{d}{dt}f(t)\Big),
\end{align}
since the two terms are complex conjugates of each other. Thus
\begin{align}
\frac{d}{dt}|f(t)|^4 = 4|f(t)|^2 \Re\Big( f(t)^* \frac{d}{dt}f(t)\Big).
\end{align}
Taking the absolute value of both sides, and using the fact that for $z\in \mathbb{C}$, $|\Re(z)|\leq |z|$, we have
\begin{align}
\left| \frac{d}{dt}|f(t)|^4 \right| \leq 4 |f(t)|^2 \Big| f(t)^* \frac{d}{dt}f(t)\Big| \leq 4 |f(t)|^2  |f(t)^* | \left|\frac{d}{dt}f(t)\right| = 4 |f(t)|^3 \left|\frac{d}{dt}f(t)\right|.
\end{align}
\end{proof}

\begin{lemma} 
    \label{diff_identity}
    Let $U(t) = e^{tK}U$. Then for any $X$,
\begin{align}
\frac{d}{dt}\Big(U(t) X U(t)^\dagger\Big) = [K, U(t) X U(t)^\dagger].
\end{align}
\end{lemma}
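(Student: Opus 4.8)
The plan is to apply the Leibniz (product) rule to the conjugation $U(t) X U(t)^\dagger$ and then identify the resulting expression as a commutator, the crucial input being that the generator $K$ of the flow is anti-Hermitian. First I would differentiate the flow itself. Since $U(t) = e^{tK}U$ with $U$ fixed, differentiating the matrix exponential gives
\begin{align}
\frac{d}{dt}U(t) = K e^{tK} U = K\,U(t).
\end{align}
Taking the adjoint of $U(t)^\dagger = U^\dagger e^{tK^\dagger}$ yields $\frac{d}{dt}U(t)^\dagger = U^\dagger e^{tK^\dagger} K^\dagger = U(t)^\dagger K^\dagger$.

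Next I would assemble these two derivatives via the product rule, writing $M(t) = U(t) X U(t)^\dagger$ for brevity:
\begin{align}
\frac{d}{dt}\Big(U(t) X U(t)^\dagger\Big) = \big(K\,U(t)\big) X U(t)^\dagger + U(t) X \big(U(t)^\dagger K^\dagger\big) = K\,M(t) + M(t)\,K^\dagger.
\end{align}
The key step is then to invoke that $K$ is anti-Hermitian. Since $U(t)$ is required to trace out a one-parameter family of unitaries in $\mathcal{U}(d)$, its generator $K$ is an element of the Lie algebra $\mathfrak{u}(d)$, whence $K^\dagger = -K$. Substituting this into the second term collapses the sum to $K\,M(t) - M(t)\,K = [K, M(t)] = [K, U(t) X U(t)^\dagger]$, which is the claimed identity.

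I expect no substantive obstacle: this is a standard computation whose only subtlety worth flagging is that the identity holds \emph{precisely} because $K \in \mathfrak{u}(d)$ is anti-Hermitian. For a generic (non-anti-Hermitian) $K$ the right-hand side would instead be $K M(t) - M(t) K^\dagger$, which is not a commutator; it is exactly the unitarity of the flow that makes the two terms combine. In the intended application $U(t)$ is the unitary curve used to take directional derivatives on $\mathcal{U}(d)$, so anti-Hermiticity of $K$ is automatic and the lemma follows immediately.
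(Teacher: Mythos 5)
Your proof is correct and follows essentially the same route as the paper's: product rule on the conjugation, differentiation of the matrix exponential, and anti-Hermiticity of $K$ to collapse $KM(t) + M(t)K^\dagger$ into the commutator. The only cosmetic difference is that you defer the substitution $K^\dagger = -K$ to the end and flag it explicitly, whereas the paper writes $U(t)^\dagger = U^\dagger e^{-tK}$ from the outset; your remark that anti-Hermiticity is the essential hypothesis (left implicit in the lemma statement) is a fair observation.
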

\begin{proof}
\begin{align}
\frac{d}{dt}\Big(U(t) X U(t)^\dagger\Big) &= \Big(\frac{d}{dt}U(t)\Big) X U(t)^\dagger + U(t) X \frac{d}{dt}U(t)^\dagger.
\end{align} 
But $\frac{d}{dt}U(t) = \frac{d}{dt}e^{tK}U=Ke^{tK}U=K U(t)$. Similarly, $\frac{d}{dt}U(t)^\dagger = \frac{d}{dt} U^\dagger e^{-tK} = -U^\dagger K e^{-tK} = -U^\dagger e^{-tK}K = -U(t)^\dagger K $. Thus
\begin{align}
\frac{d}{dt}\Big(U(t) X U(t)^\dagger\Big)= K U(t) X U(t)^\dagger - U(t) X U(t)^\dagger  K =  [K, U(t) X U(t)^\dagger].
\end{align}
\end{proof}

\noindent Armed with these results, we now bound the magnitude of the directional derivatives of $H_2$. Recall that the directional derivative at $U\in \mathcal{U}(d)$ in the tangent direction $KU$ can be expressed $\frac{d}{dt}H_2(e^{tK}U)\Big|_{t=0}$. To see this, suppose we have a smooth curve $\gamma(t)$ such that $\gamma(0)=U$. Now since $\gamma(t)\in \mathcal{U}(d)$, we have $\gamma(t)\gamma(t)^\dagger = I$: differentiating both sides yields
\begin{align}
0 &= \frac{d}{dt}\big(\gamma(t)\gamma(t)^\dagger \big) \Longrightarrow \dot{\gamma}(t)\gamma(t)^\dagger=-\gamma(t)\dot{\gamma}(t)^\dagger ,
\end{align}
so that $K =  \dot{\gamma}(t)\gamma(t)^\dagger$ is anti-Hermitian, and $\dot{\gamma}(t) = K\gamma(t)$. We thus see that a vector $\dot{\gamma}(t)$ in the tangent space at $U=\gamma(t)$ can be written in the form $KU$: in particular, at the identity, $\dot{\gamma}(t) =K$ must be anti-Hermitian itself. If we consider in particular $\gamma(t)=e^{tK}U$, so that $\gamma(0)=U$ and $\dot{\gamma}(0)=KU$, the derivative $\frac{d}{dt}H_2(e^{tK}U)\Big|_{t=0}$ is precisely the directional derivative at $U$ in the tangent direction $KU$. Ultimately, we will relate the following bound on the magnitude of $H_2$'s directional derivatives to a bound on its Lipschitz constant.

\begin{theorem}
    \label{derivative_bound}
    For all $U \in \mathcal{U}(d)$ and $K^\dagger=-K$ with $\lVert K \rVert_F=1$, 
\begin{align}
\left| \frac{d}{dt}H_2(e^{tK}U)\Big|_{t=0}\right| \leq \frac{8}{\sqrt{d}}
\end{align}
\end{theorem}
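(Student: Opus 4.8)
The plan is to differentiate $H_2(e^{tK}U)$ term by term and bound each contribution uniformly. Writing $U(t) = e^{tK}U$ and $f_{\B{ab}}(t) = \tr\big(D_\B{a}^\dagger U(t) D_\B{b} U(t)^\dagger\big)$, the $2$-Clifford entropy becomes $H_2(U(t)) = 1 - d^{-6}\sum_{\B{ab}}|f_{\B{ab}}(t)|^4$, so the triangle inequality gives
\[
\left|\frac{d}{dt}H_2(U(t))\right| \le \frac{1}{d^6}\sum_{\B{ab}}\left|\frac{d}{dt}|f_{\B{ab}}(t)|^4\right|.
\]
By Lemma \ref{dt_abs_bound}, each summand is controlled by $4|f_{\B{ab}}(t)|^3 |f'_{\B{ab}}(t)|$, so it suffices to bound the modulus and the derivative of $f_{\B{ab}}$ separately.

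The modulus is easy: since $D_\B{a}^\dagger U(t) D_\B{b} U(t)^\dagger$ is a product of unitaries, hence unitary, its eigenvalues have modulus one and $|f_{\B{ab}}(t)| \le d$. For the derivative I would apply Lemma \ref{diff_identity} to write $f'_{\B{ab}}(t) = \tr\big(D_\B{a}^\dagger [K, W_\B{b}(t)]\big)$, where $W_\B{b}(t) = U(t) D_\B{b} U(t)^\dagger$ is unitary. Moving the commutator via the trace identity $\tr(A[B,C]) = \tr([A,B]C)$ and then applying Cauchy-Schwarz in the Hilbert-Schmidt inner product gives $|f'_{\B{ab}}(t)| \le \lVert [D_\B{a}^\dagger, K]\rVert_F \lVert W_\B{b}(t)\rVert_F$. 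Here $\lVert W_\B{b}\rVert_F = \lVert D_\B{b}\rVert_F = \sqrt{d}$ by unitary invariance of the Frobenius norm, while submultiplicativity against the operator norm (which is $1$ for any displacement operator) together with $\lVert K\rVert_F = 1$ yields $\lVert[D_\B{a}^\dagger, K]\rVert_F \le 2$. Thus $|f'_{\B{ab}}(t)| \le 2\sqrt{d}$, and combining through Lemma \ref{dt_abs_bound} each term obeys $\big|\frac{d}{dt}|f_{\B{ab}}|^4\big| \le 8\sqrt{d}\,|f_{\B{ab}}|^3$.

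The crux is the final sum $\sum_{\B{ab}}|f_{\B{ab}}|^3$. Bounding every factor by the pointwise estimate $|f_{\B{ab}}| \le d$ overcounts by a factor of $d^2$ and fails to reproduce the claimed constant. The right move is to spend only one factor on the pointwise bound and handle the remaining $|f_{\B{ab}}|^2$ with the global normalization supplied by Lemma \ref{bistochastic}: since $\mathfrak{D}_{\B{ab}}(U) = d^{-2}|f_{\B{ab}}|^2$ is bistochastic, summing over all $d^4$ index pairs gives $\sum_{\B{ab}}|f_{\B{ab}}|^2 = d^4$. Hence $\sum_{\B{ab}}|f_{\B{ab}}|^3 \le d\sum_{\B{ab}}|f_{\B{ab}}|^2 = d^5$, and assembling the pieces yields $\big|\frac{d}{dt}H_2(U(t))\big| \le d^{-6}\cdot 8\sqrt{d}\cdot d^5 = 8/\sqrt{d}$. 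Evaluating at $t=0$ is then immediate, since all estimates are uniform in $t$. I expect this interplay between the pointwise bound and the bistochastic $\ell^2$ constraint to be the only subtle point; the remaining ingredients are routine unitary-invariance estimates.
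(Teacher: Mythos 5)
Your proposal is correct and follows essentially the same route as the paper's proof: the same term-by-term differentiation, the same use of Lemmas \ref{dt_abs_bound} and \ref{diff_identity}, the same Cauchy--Schwarz bound on the derivative of the characteristic function, and the same final combination of the pointwise bound with the bistochastic normalization from Lemma \ref{bistochastic}. The only cosmetic differences are that you work with the unnormalized traces $f_{\B{ab}} = d\,\mathfrak{C}_{\B{ab}}$ and bound the commutator as $\lVert[D_\B{a}^\dagger,K]\rVert_F \le 2$ via operator-norm submultiplicativity rather than splitting $\lVert[W_\B{b},D_\B{a}^\dagger]\rVert_F \le 2\sqrt{d}$; both yield the identical constant.
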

\begin{proof}
The 2-Clifford entropy can be expressed
\begin{align}
H_2(U) &= 1- \frac{1}{d^2}\sum_{\B{ab}} \mathfrak{D}_\B{ab}(U)^2=1-\frac{1}{d^2}\sum_{\B{ab}} |\mathfrak{C}_{\B{ab}}(U)|^4,
\end{align}
where $\mathfrak{C}_{\B{ab}}(U) = \frac{1}{d}\tr(D_\B{a}^\dagger U D_\B{b} U^\dagger)$, and $\mathfrak{D}(U) = |\mathfrak{C}(U)|^2$ is bistochastic (Lemma \ref{bistochastic}). Let $K$ satisfy $K^\dagger=-K$, $\lVert K \rVert_F = 1$ and let  $U(t) = e^{tK}U$. Writing $H_2(t) = H_2(U(t))$, first using the triangle inequality, and then Lemma \ref{dt_abs_bound}, 
\begin{align}
\left|\frac{d}{dt}H_2(t)\right| &=\frac{1}{d^2}\left|\sum_{\B{ab}}\frac{d}{dt}|\mathfrak{C}_{\B{ab}}(t)|^4\right| \leq \frac{1}{d^2}\sum_{\B{ab}} \left| \frac{d}{dt}|\mathfrak{C}_{\B{ab}}(t)|^4\right|\leq \frac{4}{d^2}\sum_{\B{ab}} |\mathfrak{C}_\B{ab}(t)|^3 \left| \frac{d}{dt}\mathfrak{C}_\B{ab}(t)\right|.
\end{align}
By Lemma \ref{diff_identity},
\begin{align} 
\left|\frac{d}{dt}\mathfrak{C}_\B{ab}(t)\right| &=\frac{1}{d}\left|\tr\left(D_\B{a}^\dagger \frac{d}{dt}\big(U(t) D_\B{b} U(t)^\dagger\big) \right)\right| = \frac{1}{d}\left|\tr\Big(D_\B{a}^\dagger \big[K, U(t)D_\B{b}U(t)^\dagger\big]\Big)\right|\\
&= \frac{1}{d}\left|\tr\Big( K\big[ U(t) D_\B{b} U(t)^\dagger, D_\B{a}^\dagger\big]\Big)\right|\leq \frac{1}{d}\lVert K \rVert_F \left\lVert \big[ U(t) D_\B{b} U(t)^\dagger, D_\B{a}^\dagger\big]\right\rVert_F\\
&\leq \frac{1}{d}\big(\lVert U(t) D_\B{b} U(t)^\dagger D_\B{a}^\dagger\rVert_F + \lVert D_\B{a}^\dagger U(t) D_\B{b} U(t)^\dagger \rVert_F \big)\leq \frac{2}{\sqrt{d}},
\end{align}
where we've used the Cauchy-Schwartz inequality, the fact that $\lVert K \rVert_F=1$, the triangle inequality, and the fact that $\lVert U \rVert_F = \sqrt{d}$ for any unitary. Thus
\begin{align}
\left|\frac{d}{dt}H_2(t)\right|  \leq \frac{8}{d^2\sqrt{d}}\sum_{\B{ab}} |\mathfrak{C}_\B{ab}(t)|^3 \leq \frac{8}{\sqrt{d}},
\end{align}
where the last follows from the fact that since $|\mathfrak{C(t)}|^2$ is bistochastic, the sum must be less than $d^2$. 
\end{proof}

\noindent We now show how to translate a bound on the magntitude on the directional derivatives into a bound on the Lipschitz constant.

\begin{lemma}
    \label{deriv_justification}
Let $f:\mathcal{U}(d)\rightarrow \mathbb{R}$ be differentiable, and suppose that for all $U\in \mathcal{U}(d)$ and $K$ such that $K^\dagger = -K$, $\lVert K \rVert_F=1$, there exists some $C>0$ such that
\begin{align}
\left| \frac{d}{dt}f\big(e^{tK}U\big)\Big|_{t=0}\right| \leq C.
\end{align}
Then
\begin{align}
|f(U)-f(V)| \leq \frac{\pi}{2}C\lVert U-V \rVert_{F},
\end{align}
that is, $f$ is $\pi C /2 $-Lipschitz with respect to the Hilbert-Schmidt norm.
\end{lemma}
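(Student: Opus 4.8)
The plan is to integrate the directional-derivative bound along a minimal geodesic joining $U$ to $V$ in $\mathcal{U}(d)$. Since $VU^\dagger$ is unitary, I would write $VU^\dagger = e^{K_0}$ for some anti-Hermitian $K_0$, choosing the principal branch of the logarithm so that the eigenvalues of $K_0$ are $\{i\theta_j\}$ with $\theta_j \in (-\pi,\pi]$. Then $\gamma(t) = e^{tK_0}U$ is a smooth path in $\mathcal{U}(d)$ with $\gamma(0)=U$ and $\gamma(1)=V$, and by the fundamental theorem of calculus,
\[
f(V)-f(U) = \int_0^1 \frac{d}{dt}f\big(e^{tK_0}U\big)\,dt.
\]

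Next I would identify the integrand with a directional derivative to which the hypothesis applies. Writing $e^{tK_0}U = e^{(t-t_0)K_0}\gamma(t_0)$ gives $\frac{d}{dt}f(e^{tK_0}U)\big|_{t=t_0} = \frac{d}{ds}f(e^{sK_0}\gamma(t_0))\big|_{s=0}$. Setting $\hat{K} = K_0/\lVert K_0\rVert_F$, a unit-norm anti-Hermitian matrix, and using the chain rule on $e^{sK_0}=e^{s\lVert K_0\rVert_F\hat K}$, one finds $\frac{d}{ds}f(e^{sK_0}\gamma(t_0))\big|_{s=0} = \lVert K_0\rVert_F\,\frac{d}{ds}f(e^{s\hat K}\gamma(t_0))\big|_{s=0}$. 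The hypothesis bounds the last factor in magnitude by $C$, uniformly in $t_0$, so each integrand is at most $C\lVert K_0\rVert_F$ in absolute value and hence $|f(V)-f(U)| \leq C\lVert K_0\rVert_F$.

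It then remains to compare $\lVert K_0\rVert_F$ with $\lVert U-V\rVert_F$. Because the Frobenius norm is unitarily invariant, $\lVert U-V\rVert_F = \lVert (e^{K_0}-I)U\rVert_F = \lVert e^{K_0}-I\rVert_F$, and diagonalizing $K_0$ gives $\lVert K_0\rVert_F^2 = \sum_j \theta_j^2$ while $\lVert e^{K_0}-I\rVert_F^2 = \sum_j 4\sin^2(\theta_j/2)$. The desired estimate $\lVert K_0\rVert_F \leq \tfrac{\pi}{2}\lVert U-V\rVert_F$ reduces to the termwise scalar inequality $\theta^2 \leq \pi^2\sin^2(\theta/2)$, equivalently $|\theta| \leq \pi|\sin(\theta/2)|$, for $|\theta|\leq\pi$; this holds because $\phi/\sin\phi$ is increasing on $[0,\pi/2]$ with maximum value $\pi/2$ attained at $\phi=\pi/2$. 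Chaining the two bounds yields $|f(U)-f(V)|\leq \tfrac{\pi}{2}C\lVert U-V\rVert_F$, as claimed.

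I expect the final spectral comparison to be the main obstacle: one must select the branch of the logarithm so that the rotation angles $\theta_j$ lie in $[-\pi,\pi]$ (precisely the regime in which $\gamma$ is a \emph{minimal} geodesic), since the inequality $|\theta|\leq\pi|\sin(\theta/2)|$ fails outside that range and the factor $\pi/2$ is exactly the worst case $\theta_j=\pi$. Everything else is routine bookkeeping: the fundamental theorem of calculus, the chain-rule rescaling that converts an arbitrary geodesic direction into a unit-norm one so the hypothesis applies, and unitary invariance of the Frobenius norm.
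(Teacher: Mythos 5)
Your proof is correct, and its overall strategy matches the paper's: integrate the directional-derivative bound along a path from $U$ to $V$ and then compare the path length to the Frobenius distance, which is where the factor $\pi/2$ enters. The difference is in how that comparison is handled. The paper works with an arbitrary piecewise $C^1$ curve, obtains $|f(U)-f(V)|\leq C\cdot\mathrm{Length}(\gamma)$, specializes to a geodesic, and then simply cites Lemma 1.3 of Meckes for $d_g(U,V)\leq\tfrac{\pi}{2}\lVert U-V\rVert_F$. You instead construct the minimal geodesic explicitly as $\gamma(t)=e^{tK_0}U$ with $VU^\dagger=e^{K_0}$ and eigenangles in $(-\pi,\pi]$, and prove the metric comparison from scratch by diagonalizing: $\lVert K_0\rVert_F^2=\sum_j\theta_j^2$ versus $\lVert e^{K_0}-I\rVert_F^2=\sum_j 4\sin^2(\theta_j/2)$, reduced to the scalar inequality $|\theta|\leq\pi|\sin(\theta/2)|$ on $[-\pi,\pi]$, which you justify correctly via the monotonicity of $\phi/\sin\phi$ on $[0,\pi/2]$. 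Your chain-rule step converting $\tfrac{d}{dt}f(e^{tK_0}U)\big|_{t=t_0}$ into $\lVert K_0\rVert_F$ times a unit-direction derivative at $\gamma(t_0)$ is the same rescaling the paper performs with $\hat{K}(t)=\dot\gamma(t)\gamma(t)^\dagger/\lVert\dot\gamma(t)\rVert_F$. What your version buys is self-containedness (no external metric lemma) and an explicit identification of where the constant $\pi/2$ is saturated (eigenvalue $-1$ of $VU^\dagger$); what the paper's version buys is generality, since bounding $|f(U)-f(V)|$ by $C$ times the length of an arbitrary curve makes clear the argument depends only on the intrinsic geodesic distance. The only caveats in your write-up are trivial: the case $K_0=0$ (i.e., $U=V$) should be set aside before normalizing $\hat K=K_0/\lVert K_0\rVert_F$, and the branch ambiguity at eigenvalue $-1$ is harmless since either choice of sign for $\theta_j=\pm\pi$ gives the same norms.
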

\begin{proof}
Let $\gamma: [0,1]\rightarrow \mathcal{U}(d)$ be some piecewise $C^1$ curve such that $\gamma(0)=V$ and $\gamma(1)=U$. By the fundamental theorem of calculus,
\begin{align}
f(U)-f(V) = \int_0^1 \frac{d}{dt}f(\gamma(t))dt \Longrightarrow |f(U)-f(V)| \leq \int_0^1 \left| \frac{d}{dt}f(\gamma(t))dt \right|,
\end{align}
where we've applied the triangle inequality. Now since $\gamma(t)\in \mathcal{U}(d)$, we have $\gamma(t)\gamma(t)^\dagger = I$: differentiating both sides yields
 $K =  \dot{\gamma}(t)\gamma(t)^\dagger$ is anti-Hermitian. Moreover, by the unitary invariance of the Frobenius norm $\lVert K(t)\rVert_F = \lVert  \dot{\gamma}(t)\gamma(t)^\dagger\rVert_F=\lVert \dot{\gamma}(t)\rVert_F$. Thus let
\begin{align}
\hat{K}(t) =\frac{K(t)}{\lVert K(t)\rVert_F} = \frac{\dot{\gamma}(t)\gamma(t)^\dagger}{\lVert \dot{\gamma}(t)\rVert_F},
\end{align}
so that 
\begin{align}
\lVert \dot{\gamma}(t)\rVert_F \frac{d}{d\tau}\Big(e^{\tau \hat{K}(t)}\gamma(t) \Big)\Bigg|_{\tau=0}=\lVert \dot{\gamma}(t)\rVert_F \hat{K}(t)\gamma(t) = \lVert \dot{\gamma}(t)\rVert_F \frac{\dot{\gamma}(t)\gamma(t)^\dagger}{\lVert \dot{\gamma}(t)\rVert_F} \gamma(t) =  \dot{\gamma}(t).
\end{align}
By the chain rule, therefore, 
\begin{align}
\frac{d}{dt}f(\gamma(t)) = \lVert \dot{\gamma}(t)\rVert_F \frac{d}{d\tau}f\big( e^{\tau\hat{K}(t)}\gamma(t)\big)\Big|_{\tau=0},
\end{align}
where $\hat{K}(t)=-\hat{K}(t)$ and $\lVert \hat{K}(t)\rVert_F=1$. We now assume that for all such $K$ and for all $U\in \mathcal{U}(d)$, there exists a $C>0$ such that
\begin{align}
\left| \frac{d}{dt}f\big(e^{tK}U\big)\Big|_{t=0}\right| \leq C.
\end{align}
We have immediately that 
\begin{align}
\left|\frac{d}{dt}f(\gamma(t))dt \right| \leq C \lVert \dot{\gamma}(t)\rVert_F,
\end{align}
so that 
\begin{align}
|f(U)-f(V)| \leq C \int_0^1  \lVert \dot{\gamma}(t)\rVert_F = C \cdot \text{Length}(\gamma).
\end{align}
Let us take $\gamma$ to be a geodesic with respect to the Hilbert-Schmidt metric on $\mathcal{U}(d)$: by definition, this will be the shortest such curve. Denoting by $d_g(U,V)$ the geodesic distance, we have $|f(U)-f(V)| \leq C d_g(U,V)$. Lemma 1.3 of \cite{Meckes2019}, however, tells us that for $U,V\in \mathcal{U}(d)$, 
\begin{align}
d_\text{HS}(U,V) \leq d_g(U,V) \leq \frac{\pi}{2} d_\text{HS}(U,V),
\end{align}
where $d_\text{HS}= \lVert U-V\rVert_F=\sqrt{\tr((U-V)^\dagger (U-V))}$. We conclude that
\begin{align}
\forall U,V \in \mathcal{U}(d): |f(U)-f(V)| \leq \frac{\pi}{2} C \lVert U- V\rVert_F,
\end{align} 
as desired.
\end{proof}

\noindent Putting everything together, we can bound the Lipschitz constant of $H_2$.

\begin{corollary}
    \label{H2_Lipschitz}
By Theorem \ref{derivative_bound},  for all $U \in \mathcal{U}(d)$ and $K^\dagger=-K$ with $\lVert K \rVert_F=1$, 
\begin{align}
\left| \frac{d}{dt}H_2(e^{tK}U)\Big|_{t=0}\right| \leq \frac{8}{\sqrt{d}}.
\end{align}
Applying Lemma \ref{deriv_justification}, we conclude
\begin{align}
|H_2(U) - H_2(V)| \leq \frac{4\pi}{\sqrt{d}}\lVert U - V \rVert_F,
\end{align}
that is, $H_2(U)$ is Lipschitz with respect to the Hilbert-Schmidt norm on $\mathcal{U}(d)$ with Lipschitz constant $L \leq 4\pi/\sqrt{d}$.
\end{corollary}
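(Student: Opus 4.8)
The plan is to obtain the corollary as a direct composition of the two preceding results, Theorem \ref{derivative_bound} and Lemma \ref{deriv_justification}. The essential point is that Lemma \ref{deriv_justification} is phrased for an arbitrary differentiable $f:\mathcal{U}(d)\to\mathbb{R}$ whose directional derivatives along unit anti-Hermitian tangent vectors are uniformly bounded by some $C>0$, and it converts exactly this hypothesis into the Hilbert-Schmidt Lipschitz estimate $|f(U)-f(V)|\le (\pi C/2)\lVert U-V\rVert_F$. Thus the whole task reduces to identifying the correct constant $C$ for the choice $f=H_2$ and verifying that the hypotheses of the lemma are in force.

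First I would confirm that $H_2$ is differentiable on $\mathcal{U}(d)$, which is immediate: $H_2(U)=1-d^{-2}\sum_{\B{ab}}|\mathfrak{C}_{\B{ab}}(U)|^4$ is a polynomial in the entries of $U$ and $U^\dagger$ composed with the smooth map $z\mapsto |z|^4$, hence smooth. Next I would invoke Theorem \ref{derivative_bound}, which supplies precisely the uniform directional-derivative bound required, with the value $C=8/\sqrt{d}$; that is, for every $U\in\mathcal{U}(d)$ and every anti-Hermitian $K$ with $\lVert K\rVert_F=1$, the magnitude of $\frac{d}{dt}H_2(e^{tK}U)$ at $t=0$ is at most $8/\sqrt{d}$. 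With $f=H_2$ and $C=8/\sqrt{d}$ in hand, applying Lemma \ref{deriv_justification} yields $|H_2(U)-H_2(V)|\le (\pi/2)(8/\sqrt{d})\lVert U-V\rVert_F$, and simplifying the constant $(\pi/2)\cdot 8/\sqrt{d}=4\pi/\sqrt{d}$ gives the claimed bound $L\le 4\pi/\sqrt{d}$.

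Since the corollary is a straightforward specialization, there is no genuine obstacle remaining at this stage; the analytic content resides entirely in the two results being combined. If anything warrants care, it is merely checking that the hypotheses of Lemma \ref{deriv_justification} match the conclusion of Theorem \ref{derivative_bound} verbatim, namely that the derivative bound holds uniformly over \emph{all} base points $U$ and \emph{all} unit anti-Hermitian directions $K$ rather than at a single configuration, so that the geodesic-length argument underlying the lemma applies unchanged. The factor $\pi/2$ in the final constant originates in the comparison $d_g(U,V)\le (\pi/2)\,d_{\text{HS}}(U,V)$ between geodesic and Hilbert-Schmidt distances on $\mathcal{U}(d)$, so one must keep the two metrics distinct when reporting $L$.
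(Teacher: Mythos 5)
Your proposal is correct and follows exactly the paper's route: the corollary is obtained by feeding the uniform directional-derivative bound $C=8/\sqrt{d}$ from Theorem \ref{derivative_bound} into Lemma \ref{deriv_justification} and simplifying $(\pi/2)\cdot 8/\sqrt{d}=4\pi/\sqrt{d}$. Your added checks on differentiability and on the uniformity of the bound over all base points and directions are sensible but do not change the argument.
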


\noindent Finally, we show how the Lipschitz constant of a function $f$ on the unitary group can be related to the Lipschitz constant of a function $g$ which diagnoses the failure of subadditivity of $f$. 
\begin{lemma}
\label{GLipschitz}
Let $f(U)$ be an $L$-Lipschitz function on $\mathcal{U}(d)$ and let $g(U,V) = f(UV) - f(U) - f(V)$.  The $g(U,V)$ is Lipschitz with Lipschitz constant $L_g=2\sqrt{2}L$ with respect to the product Hilbert-Schmidt norm $\lVert (U,V)\rVert_F = \sqrt{\lVert U \rVert_F^2 + \lVert V \rVert_F^2}$ on $\mathcal{U}(d)\times \mathcal{U}(d)$. 
\end{lemma}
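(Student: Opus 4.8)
The plan is to reduce everything to the $L$-Lipschitz property of $f$ together with the unitary invariance of the Frobenius norm. First I would expand the difference of $g$ at two points $(U,V)$ and $(U',V')$ as
\begin{align}
g(U,V) - g(U',V') = \big[f(UV) - f(U'V')\big] - \big[f(U) - f(U')\big] - \big[f(V) - f(V')\big],
\end{align}
so that the triangle inequality gives
\begin{align}
|g(U,V) - g(U',V')| \leq |f(UV) - f(U'V')| + |f(U) - f(U')| + |f(V) - f(V')|.
\end{align}
Invoking the hypothesis that $f$ is $L$-Lipschitz on each of the three terms bounds the right-hand side by $L\big(\lVert UV - U'V'\rVert_F + \lVert U - U'\rVert_F + \lVert V - V'\rVert_F\big)$.

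The key step is to control the composition term $\lVert UV - U'V'\rVert_F$, which is the only place where the product structure enters. I would use the telescoping identity $UV - U'V' = (U-U')V + U'(V-V')$, apply the triangle inequality, and then exploit the unitary invariance of the Frobenius norm: right-multiplication by the unitary $V$ and left-multiplication by the unitary $U'$ both preserve $\lVert\cdot\rVert_F$, yielding $\lVert UV - U'V'\rVert_F \leq \lVert U - U'\rVert_F + \lVert V - V'\rVert_F$. Substituting this back and writing $a = \lVert U - U'\rVert_F$ and $b = \lVert V - V'\rVert_F$, I obtain the intermediate estimate $|g(U,V) - g(U',V')| \leq 2L(a+b)$.

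Finally, I would convert this $\ell^1$-type bound into one with respect to the product Hilbert-Schmidt norm $\lVert (U,V) - (U',V')\rVert_F = \sqrt{a^2+b^2}$ via the elementary inequality $a+b \leq \sqrt{2}\,\sqrt{a^2+b^2}$ (equivalently, Cauchy-Schwarz applied to $(a,b)$ against $(1,1)$). This gives
\begin{align}
|g(U,V) - g(U',V')| \leq 2\sqrt{2}\,L\,\sqrt{a^2+b^2} = L_g\,\lVert (U,V) - (U',V')\rVert_F,
\end{align}
with $L_g = 2\sqrt{2}\,L$, as claimed.

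I do not expect a genuine obstacle here, since the argument is entirely elementary; the only point deserving care is the bookkeeping of the constant. One must verify that the three Lipschitz terms combine into the factor $2L$ (the composition term contributing $L(a+b)$ and the two remaining terms contributing $La$ and $Lb$), and that this then combines with the $\sqrt{2}$ from Cauchy-Schwarz to produce exactly $2\sqrt{2}\,L$ rather than a looser constant. The use of unitary invariance on both sides of the composition term is what keeps the bound this tight, so I would be careful not to introduce superfluous operator-norm estimates that would weaken it.
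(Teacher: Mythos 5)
Your argument is correct and follows essentially the same route as the paper's proof: the same triangle-inequality splitting into three terms, the same telescoping decomposition $UV - U'V' = (U-U')V + U'(V-V')$ with unitary invariance of the Frobenius norm, and the same Cauchy-Schwarz step $a+b \leq \sqrt{2}\sqrt{a^2+b^2}$ to arrive at $L_g = 2\sqrt{2}L$. No gaps.
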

\begin{proof} Using first the triangle inequality and then the fact that $f(U)$ is $L$-Lipschitz,
\begin{align}
&|g(U_1, V_1)- g(U_2, V_2)| \nonumber \\
&= |f(U_1V_1)-f(U_1)-f(V_1)-f(U_2V_2)+f(U_2)+f(V_2)|\\
&\leq |f(U_1V_1)-f(U_2V_2)|+|f(U_1)-f(U_2)|+|f(V_1)-f(V_2)|\\
&\leq L\lVert U_1 V_1 - U_2V_2\rVert_F+L\lVert U_1  - U_2\rVert_F+L\lVert V_1  - V_2\rVert_F.
\end{align}
But from the triangle inequality again and the unitary invariance of the Hilbert-Schmidt norm,
\begin{align}
\lVert U_1 V_1 - U_2V_2\rVert_F&=\lVert U_1V_1 - U_1V_2+U_1V_2-U_2V_2\rVert_F\\
&=\lVert U_1(V_1-V_2)+(U_1-U_2)V_2\rVert_F\\
&\leq \lVert U_1(V_1-V_2)\rVert_F + \lVert (U_1-U_2)V_2\rVert_F\\
&=\lVert V_1-V_2\rVert_F + \lVert U_1 - U_2\rVert_F.
\end{align}
We conclude that
\begin{align}
	|g(U_1, V_1)- g(U_2, V_2)|&\leq 2L\big(\lVert U_1 - U_2\rVert_F+\lVert V_1-V_2\rVert_F \big)\\
	&\leq 2\sqrt{2}L\big(\lVert U_1 - U_2\rVert^2_F+\lVert V_1-V_2\rVert^2_F \big)^{1/2}\\
	&=2\sqrt{2}L\lVert(U_1,V_1)-(U_2,V_2) \rVert_F,
\end{align}
where in the second line we've used the Cauchy-Schwartz inequality in the form $a+b \leq \sqrt{2}(a^2+b^2)^{1/2}$.
\end{proof}

\end{document}